\Crefname{algocf}{Algorithm}{Algorithms}
\def\orcidID#1{\smash{\href{http://orcid.org/#1}{\protect\raisebox{-1.25pt}{\protect\includegraphics{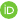}}}}}
\title{\autoqq: From Verification of Quantum Circuits\\ to Verification of Quantum Programs (Technical Report)}
\author{Yu-Fang Chen\inst{1}\orcidID{0000-0003-2872-0336} \and
  Kai-Min Chung\inst{1}\orcidID{0000-0002-3356-369X} \and
  Min-Hsiu Hsieh\inst{2} \and
  Wei-Jia Huang\inst{2}\orcidID{0000-0002-8695-5689} \and \\
  Ond\v{r}ej Leng\'{a}l\inst{3}\orcidID{0000-0002-3038-5875} \and
  Jyun-Ao Lin\inst{1}\orcidID{0000-0001-8560-2147} \and
  Wei-Lun Tsai\inst{1}\orcidID{0009-0003-5832-0867}
}
\institute{Institute of Information Science, Academia Sinica, Taipei, Taiwan
\and
Hon Hai Quantum Computing Research Center, Taipei, Taiwan
\and
Faculty of Information Technology, Brno University of Technology, Brno, Czech Republic
}
\begin{document}

\maketitle

\vspace{-7mm}

\begin{abstract}
We present a~verifier of quantum programs called \autoqq.
Quantum programs extend quantum circuits (the domain of \autoq) by classical
control flow constructs, which enable users to describe advanced quantum
algorithms in a~formal and precise manner.
The extension is highly non-trivial, as we needed to tackle both theoretical
challenges (such as the treatment of measurement, the normalization problem, and
lifting techniques for verification of classical programs with loops to the
quantum world), and engineering issues (such as extending the input format with
a~support for specifying loop invariants).
We have successfully used \autoqq to verify two types of advanced quantum
programs that cannot be expressed using only quantum circuits:
the \emph{repeat-until-success} (RUS) algorithm and the weak-measurement-based
version of Grover's search algorithm.
\autoqq can efficiently verify all our benchmarks: all RUS algorithms were verified instantly and, for the weak-measurement-based version of Grover's search, we were able to handle the case of 100 qubits in $\sim$20 minutes.

\end{abstract}

\vspace{-9.0mm}
\section{Introduction}\label{sec:introduction}
\vspace{-2.0mm}

Quantum \emph{programs} are an extension of quantum \emph{circuits} that
provide users with greater control over quantum computing by allowing them to
use more complex programming constructs like branches and loops.
Some of the most advanced quantum algorithms cannot be defined by quantum circuits alone.
For example, certain class of programs, such as the \emph{repeat-until-success}
(RUS) algorithms~\cite{RUS} (which are commonly used in generating special
quantum gates) and the weak-measurement-based version~\cite{weakly:chris} of
Grover's search algorithm~\cite{Grover96}, use a~loop with the condition
being a~classical value (0~or~1) obtained by measuring a~particular qubit.
This added expressivity presents new challenges, particularly in terms of verification.
The additional complexity comes from the \emph{measurement} operation, where
a~particular qubit is measured to obtain a classical value (and the quantum
state is partially collapsed, which might require \emph{normalization}), and
reasoning about control flow induced by \emph{branches} and
\emph{loops}.

In classical program verification, a~prominent role is played by
\emph{deductive verification}~\cite{Floyd93,Hoare69,HahnleH19}, represented,
e.g., by the tools Dafny~\cite{Leino10}, KeY~\cite{KeY},
Frama-C~\cite{BaudinBBCKKMPPS21}, VeriFast~\cite{JacobsSP10},
VCC~\cite{CohenDHLMSST09}, and many more.
These tools only require the users to provide specifications in the form of
pre- and post-conditions, along with appropriate loop invariants.
The rest of the proving process is entirely (in the ideal case) automated.
Unfortunately, in the realm of quantum computing, similar fully automated
deductive verification tools are, to the best of our knowledge, missing.
Advanced tools for analysis and verification of quantum programs---based on,
e.g., \emph{quantum Hoare logic} and the tool \coqq~\cite{zhou2023coqq} or the
\emph{path-sum} formalism~\cite{amy2018towards} and the
tool~\qbricks~\cite{chareton2021automated}---are quite powerful but require
a~significant amount of human effort.

To bridge this gap, we present \autoqq, a major update over \autoq~\cite{chen2023autoq} with an added support for \emph{quantum programs} (\autoq only supported quantum \emph{circuits}). In \autoq, given a triple $\{P\}\,C\,\{Q\}$, where~$P$ and~$Q$ are the pre- and post-conditions recognizing sets of (pure) quantum states (represented by \emph{tree automata}) and~$C$ is a~quantum circuit, we can verify if all quantum states in~$P$ reach some state in $Q$ after executing $C$. In \autoqq, we addressed several key challenges to make the support of quantum programs possible. First, we need to handle \emph{branch} statements. The key issue here is to handle \emph{measurement} of quantum states whose value is used in a~branch condition.
For this we developed automata-based algorithms
to compute the quantum states after the measurement (\cref{sec:branch}).
The second challenge is the handling of \emph{loop} statements.
Similarly to deductive verification of classical programs, we require the users to provide \emph{an invariant} for each loop.
With the loop invariant provided, we developed a framework handling the rest of
the verification process fully automatically.
Moreover, we show that a naive implementation of the measurement operation will
encounter the \emph{probability amplitude normalization} problem.
This is handled by designing a new algorithm for \emph{entailment testing} (\cref{sec:inclusion}).

Under this framework, the preconditions, postconditions, and invariants are all described using a new automata model called \emph{level-synchronized tree automata (LSTAs)}~\cite{popl}. 
LSTAs are specifically designed to efficiently encode quantum states and gate operations. As the core data structure of the tool, we provide a formal definition of LSTAs in~\cref{sec:lsta} to facilitate the presentation of our new entailment testing approach.

We used \autoqq to verify various quantum programs using the
\emph{repeat-until-success} (RUS) paradigm~\cite{RUS}, as well as the
weak-measurement-based version~\cite{weakly:chris} of Grover's
search~\cite{Grover96} (\cref{sec:experiments}).
\autoqq can efficiently verify all our benchmarks.
The verification process for all RUS algorithms was instantaneous and for the
weakly measured versions of Grover, we were able to handle the case of 100
qubits in $\sim$20\,min.
To the best of our knowledge, \autoqq is currently the only tool for
verification of quantum programs with such a~degree of automation.

\paragraph{Related work.}
Our work aligns with Hoare-style verification of quantum programs, a~topic
extensively explored in prior studies~\cite{zhou2019applied,unruh2019quantum,ying2012floyd,feng2021quantum,liu2019formal}.
This approach, inspired by D’Hondt and Panangaden, utilizes diverse Hermitian
operators as quantum predicates, resulting in a robust and comprehensive proof
system~\cite{d2006quantum}. However, specifying properties with Hermitian
operators is often non-intuitive and difficult for automation due to their vast
matrix sizes. Consequently, these methods are typically implemented using proof
assistants like \coq~\cite{bertot2013interactive},
\isabelle~\cite{wenzel2008isabelle}, or standalone tools built on top of \coq,
like \coqq~\cite{zhou2023coqq}.
These tools require substantial manual effort in the proof search.
The \qbricks approach~\cite{Chareton2021} addresses the challenge of proof search by combining cutting-edge theorem provers with decision procedures, leveraging the Why3 platform~\cite{filliatre2013why3}. 
Nevertheless, this approach still demands considerable human intervention.

In the realm of automatic quantum software analysis tools, 
\emph{circuit equivalence checkers}~\cite{amy2018towards,Coecke2011,hietala2019verified,xu2022quartz,CohenDHLMSST09} prove to be efficient but less flexible in specifying desired properties, 
primarily focusing on equivalence. 
These tools are valuable in compiler validation, with notable examples being~\qcec \cite{burgholzer2020advanced}, \feynman~\cite{amy2018towards}, and \sliqec~\cite{9951285,10.1145/3489517.3530481}.
\emph{Quantum model checking}, supporting a rich specification language (various
temporal logics \cite{FengYY13,MateusRSS09,XuFMD22}), is, due to its limited
scalability, more suited for verifying high-level protocols~\cite{AnticoliPTZ16}.
\qpmc~\cite{FengYY13} stands out as a notable tool in this category. 
Quantum abstract interpretation \cite{yu2021quantum,perdrix2008quantum}
over-approximates the reachable state space to achieve better scalability, but
so far handles only circuits.
The work in~\cite{Diffrule1,Diffrule2} aims at the verification of
\emph{parameterized quantum programs} like \emph{variational quantum eigensolver
(VQE)} or \emph{quantum approximate optimization algorithm (QAOA)}.
However, the correctness properties they focused are very different from what~\autoqq can handle.
While the mentioned tools are fully automated, they serve different purposes or
address different phases of the development cycle compared to \autoqq.


\newcommand{\decisiontreegates}[0]{
\begin{figure}[t]
\begin{minipage}{\linewidth}
\begin{subfigure}[b]{0.2\linewidth}
\centering
\begin{tikzpicture}[anchor=base]
    \node {$x_1$}[sibling distance = 1.2cm, level distance = 0.8cm]
        child  {node {$x_2$} edge from parent [dashed, sibling distance = .7cm]
            child  {node {$a_1$} edge from parent [dashed]}
            child {node {$a_2$} edge from parent [solid]};
            }
        child {node {$x_2$} edge from parent [solid, sibling distance = .7cm]
            child  {node {$a_3$} edge from parent [dashed]}
            child {node {$a_4$}}};
    \end{tikzpicture}
    \caption{State $q$}
    \label{fig:state}
\end{subfigure}
\begin{subfigure}[b]{0.2\linewidth}
\centering
    \begin{tikzpicture}[anchor=base]
    \node {$x_1$}[sibling distance = 1.4cm, level distance = 0.8cm]
        child  {node {$x_2$} edge from parent [dashed, sibling distance = .7cm]
            child  {node {$a_3$} edge from parent [dashed]}
            child {node {$a_4$} edge from parent [solid]};
            }
        child {node {$x_2$} edge from parent [solid, sibling distance = .7cm]
            child  {node {$a_1$} edge from parent [dashed]}
            child {node {$a_2$}}};
    \end{tikzpicture}
    \caption{Applied $X_1$}
    \label{fig:X1state}
\end{subfigure}
\begin{subfigure}[b]{0.23\linewidth}
\centering
    \begin{tikzpicture}[anchor=base]
    \node {$x_1$}[sibling distance = 1.4cm, level distance = 0.8cm]
        child  {node {$x_2$} edge from parent [dashed, sibling distance = .7cm]
            child  {node {$a_1$} edge from parent [dashed]}
            child {node {$a_2$} edge from parent [solid]};
            }
        child {node {$x_2$} edge from parent [solid, sibling distance = .7cm]
            child  {node {$a_4$} edge from parent [dashed]}
            child {node {$a_3$}}};
    \end{tikzpicture}
    \caption{Applied $CX^1_2$}
    \label{fig:CX12state}
\end{subfigure}
\begin{subfigure}[b]{0.32\linewidth}
\centering
        \begin{tikzpicture}[anchor=base]
    \node {$x_1$}[sibling distance = 2cm, level distance = 0.8cm]
        child  {node {$x_2$} edge from parent [dashed, sibling distance = 1cm]
            child  {node {$\frac{a_1+a_3}{\sqrt2}$} edge from parent [dashed]}
            child {node {$\frac{a_2+a_4}{\sqrt2}$} edge from parent [solid]};
            }
        child {node {$x_2$} edge from parent [solid, sibling distance = 1cm]
            child  {node {$\frac{a_1-a_3}{\sqrt2}$} edge from parent [dashed]}
            child {node {$\frac{a_2-a_4}{\sqrt2}$}}};
    \end{tikzpicture}

    \caption{Applied $H_1$}
    \label{fig:H1state}
\end{subfigure}
\end{minipage}
\caption{The effect of applying selected quantum gates to state $q$.}
    \label{fig:gatestate}
    \vspace{-5mm}
\end{figure}
}

\newcommand{\decisiontreecontrolled}[0]{
\begin{figure}[tb]
\begin{minipage}{\linewidth}
\begin{subfigure}[b]{0.24\linewidth}
\centering
    \begin{tikzpicture}[anchor=base]
    \node {$x_1$}[sibling distance = 1.4cm, level distance = 0.8cm]
        child  {node {$x_2$} edge from parent [dashed, sibling distance = .7cm]
            child  {node {$0$} edge from parent [dashed]}
            child {node {$0$} edge from parent [solid]};
            }
        child {node {$x_2$} edge from parent [solid, sibling distance = .7cm]
            child  {node {$a_0$} edge from parent [dashed]}
            child {node {$a_1$}}};
    \end{tikzpicture}
    \caption{The initial state}
    \label{fig:initstate}
\end{subfigure}
\hfill
\begin{subfigure}[b]{0.25\linewidth}
\centering
    \begin{tikzpicture}[anchor=base]
    \node {$x_1$}[sibling distance = 1.4cm, level distance = 0.8cm]
        child  {node {$x_2$} edge from parent [dashed, sibling distance = .7cm]
            child  {node {$\frac{a_0}{\sqrt2}$} edge from parent [dashed]}
            child {node {$\frac{a_1}{\sqrt2}$} edge from parent [solid]};
            }
        child {node {$x_2$} edge from parent [solid, sibling distance = .7cm]
            child  {node {$\frac{-a_1}{\sqrt2}$} edge from parent [dashed]}
            child {node {$\frac{-a_0}{\sqrt2}$}}};
    \end{tikzpicture}
    \caption{Applied $H_1;CX^1_2$}
    \label{fig:H1CX12state}
\end{subfigure}
\hfill
\begin{subfigure}[b]{0.24\linewidth}
\centering
    \begin{tikzpicture}[anchor=base]
    \node {$x_1$}[sibling distance = 1.4cm, level distance = 0.8cm]
        child  {node {$x_2$} edge from parent [dashed, sibling distance = .7cm]
            child  {node {$\frac{a_0}{\sqrt2}$} edge from parent [dashed]}
            child {node {$\frac{a_1}{\sqrt2}$} edge from parent [solid]};
            }
        child {node {$x_2$} edge from parent [solid, sibling distance = .7cm]
            child  {node {$0$} edge from parent [dashed]}
            child {node {$0$}}};
    \end{tikzpicture}
    \caption{$\measurement{1}{=0}$}
    \label{fig:M0state}
\end{subfigure}
\hfill
\begin{subfigure}[b]{0.24\linewidth}
\centering
    \begin{tikzpicture}[anchor=base]
    \node {$x_1$}[sibling distance = 1.4cm, level distance = 0.8cm]
        child  {node {$x_2$} edge from parent [dashed, sibling distance = .7cm]
            child  {node {$0$} edge from parent [dashed]}
            child {node {$0$} edge from parent [solid]};
            }
        child {node {$x_2$} edge from parent [solid, sibling distance = .7cm]
            child  {node {$\frac{-a_1}{\sqrt2}$} edge from parent [dashed]}
            child {node {$\frac{-a_0}{\sqrt2}$}}};
    \end{tikzpicture}
    \caption{$\measurement{1}{=1}$}
    \label{fig:M1state}
\end{subfigure}
\end{minipage}
\caption{Intermediate states during the execution of~\cref{alg:if}.}
\vspace{-5mm}
\end{figure}
}

\vspace{-3.0mm}
\section{Background}\label{sec:background}
\vspace{-2.0mm}
Before we start, we first provide a minimal background needed for this work. 

\vspace{-3.0mm}
\subsection{The Tree-View of Quantum States}
\vspace{-2.0mm}

In a traditional computer system with~$n$ bits, a state is represented by~$n$
Boolean values~0 or~1. 
An~$n$-qubit \emph{quantum state} can be viewed as a ``probabilistic distribution'' over $n$-bit classical states. Here we often refer to each classical state as a \emph{computational basis state} or \emph{basis state} for short.
Hence a~quantum state can be represented by a~\emph{binary tree} whose branches correspond to the computational basis states and leaves correspond to \emph{complex probability amplitudes}\footnote{A state with complex amplitude $a+bi$ has the probability $|a+bi|^2=a^2+b^2$ of being observed.}.

In \cref{fig:state}, we can see an example of a~2-qubit state~$q$ that maps basis
states $\ket{00}$, $\ket{01}$, $\ket{10}$, $\ket{11}$ to probability amplitudes
$a_1$, $a_2$, $a_3$, and $a_4$, respectively.
The left-going dashed line denotes the $0$-branch, and the right-going solid
line denotes the $1$-branch.
A~quantum state can also be represented as a~formal sum of basis states
multiplied by their amplitudes, e.g.,
we can represent the state~$q$ as
$a_1\ket{00}+a_2\ket{01}+a_3\ket{10}+a_4\ket{11}$.

\decisiontreegates

\emph{Quantum gates} are fundamental operations performed on quantum states.
Basic quantum gates and their effects on the state $q$ from \cref{fig:state}
are shown in \cref{fig:X1state,fig:H1state,fig:CX12state}.
To specify the target qubit to which a single qubit gate $U$ is applied,
a~subscript number~$i$ is used.
For example, $X_i$ denotes the application of the~$X$ gate, which is also known
as the quantum ``negation'' gate, to the $i$-th qubit.
The effect of this gate is to swap the $0$- and $1$-subtrees under all $x_i$
nodes (cf. \cref{fig:X1state}).
On the other hand, for a~controlled gate, a~superscript number~$i$ is used to
indicate the control qubit, while a subscript number~$j$ is used for the target
qubit.
The most notable example is the controlled-$X$ gate $CX^i_j$, which applies the
$X_j$ gate to $1$-subtrees under (assuming $i<j$) all $x_i$ nodes (cf.
\cref{fig:CX12state}).
Observe that after applying an~$H$ gate (the \emph{Hadamard} gate, which creates
a~quantum superposition; cf. \cref{fig:H1state}), there is a~\emph{normalization factor}
$\frac{1}{\sqrt{2}}$ on all leaves to keep the sum of probabilities one.
This normalization factor can be derived from the tree leaf values $(a_1+a_3)$,
$(a_2+a_4)$, $(a_1-a_3)$, and $(a_2-a_4)$ and the fact that $\sum_{i=1}^4|a_i|^2=1$.

\subsection{Level-Synchronized Tree Automata}\label{sec:lsta}
As we mentioned in~\cref{sec:introduction}, the new algorithms introduced in
this work are built on top of LSTAs, making it essential to provide a~formal definition.
Readers may choose to skim this section initially and refer back to it for
details as needed later.

\subsubsection{Trees.}
Our framework is based on the concept of perfect binary trees.
A~\emph{perfect binary tree} $T$ is a map from \emph{tree nodes}
$\bigcup_{0\le i\le n}\{0,1\}^i$, for some $n\in\natz\coloneq\nat\cup\{0\}$,
to a~nonempty set of symbols~$\Sigma$, i.e., $T\colon\bigcup_{0\le i\le
n}\{0,1\}^i \to \Sigma$.
All nodes
$v \in \bigcup_{0\le i < n}\{0,1\}^i$
are \emph{internal} and have children nodes $v.0$ (left) and $v.1$ (right)
where `$.$' denotes concatenation (we denote the empty string by~$\epsilon$).
All nodes $v \in \{0,1\}^n$ are \emph{leaves} and have no children.
A~node $v$'s \emph{height} is its word length, denoted $|v|$. A node $v$ is at tree level $i$ when $|v|=i$. We denote $T$'s height by $n$.
Perfect binary trees can be used to represent quantum states or vectors of the size~$2^n$.
For instance, the quantum state of~\cref{fig:state} corresponds to a perfect binary tree 
$T = \{ \epsilon \mapsto x_1,0 \mapsto x_2, 1 \mapsto x_2, 00 \mapsto a_1, 01 \mapsto a_2, 10 \mapsto a_3, 11 \mapsto a_4\}$ of height~2.
Children of the node $1$ are $10$ and $11$, and the leaf node $10$ has no children.

\subsubsection{LSTAs.} A~\emph{(symbolic) level-synchronized tree automaton (\lsta)}
\cite{popl} is a tuple $\aut = \tuple{Q, \mathbb N \cup \symterm, \Delta,
\rootstates, \globconstr}$ where $Q$ is a finite set of \emph{states},
$\rootstates \subseteq Q$ is a set of \emph{root states}, $\symterm$ is a set
of terms obtained from complex numbers $\complex$ and a set of \emph{complex
variables} $\vars$ using function symbols from some fixed theory
(in this paper, we will use $\nat$ for internal node symbols and $\complex \cup
\symterm$ for leaf symbols).
$\Delta$ is a~set of \emph{transitions} of the form $\delta_i = \ctranstreenoset
q f {q_1,q_2}{C}$ (\emph{internal transitions}) and $\delta_\ell =
\ctranstreenoset q f {}{C}$ (\emph{leaf transitions}), where $q, q_1, q_2 \in
Q$, $f\in\symterm$, and $C\subseteq \nat$ is a~finite set of \emph{choices}.
In the rest of the paper, we also draw the internal transition $\delta_i$ and
leaf transition $\delta_\ell$ as $\tikztrans{q}{C}{f}{q_1}{q_2}$ and
$\tikzleaftrans{q}{C}{f}$, respectively, to provide a more intuitive presentation.
In the aforementioned form, we call $q$, $f$, $C$, $q_1$, $q_2$, and
$\{q_1,q_2\}$ the \emph{top}, \emph{symbol}, \emph{choices}, \emph{left},
\emph{right}, and \emph{bottom}, respectively, of the transition $ \delta_i$,
and denote them by $\topof { \delta_i}$, $\symof { \delta_i}$, $\ell(
\delta_i)$, $\leftof{\delta_i}$, $\rightof{\delta_i}$, and $\botof{\delta_i}$,
respectively.
Needless to say, $\botof{ \delta_\ell} = \emptyset$.
We further require the choices of different transitions with the same top state
to be disjoint, i.e., $\forall \delta_1\ne\delta_2\in\Delta\colon
\topof{\delta_1}=\topof{\delta_2}\implies\ell(\delta_1)\cap\ell(\delta_2)=\emptyset$.
Finally, the \emph{global constraint} $\globconstr$ is a formula used to
restrict the values of $\mathbb{X}$ to those that satisfy~$\globconstr$ (if not
stated, it is assumed to be ``true'').
For instance, when $\mathbb{X} = \{a,b\}$, we can set $\globconstr = |a|>|b|$
to restrict the allowed valuation of $a$ and $b$.

\subsubsection{Semantics of LSTAs.}
A \emph{run} of an \lsta $\aut$ on a~tree $T$ is a map
$\run\colon \dom(T) \rightarrow \Delta$ from tree nodes to transitions of
$\aut$ such that for each node $v\in \dom(T)$, when $v$ is an internal node,
$\run(v)$ is of the form $\ctranstreenoset {q} {T(v)} {\topof{\run(v.0)},\
\topof{\run(v.1)}} C$.
When $v$ is a~leaf node, $\run(v)$ is of the form $\ctranstreenoset {q} {T(v)} {} C$.
We give a \emph{run} $\rho$ of the \lsta $\aut$ in \cref{fig:algifpost} on the
tree $T$ in \cref{fig:M1state} (used later in \cref{alg:if}) as follows:
\begin{flalign*}
  &&
  \rho(\epsilon)&{}= \tikztrans {p} {\{1\}} {x_1} {q_0}{q_-}, 
  &
  \rho(0)&{} =\tikztrans {q_{0}}{\{1\}} {x_2} {r_{0}}{r_{0}},\\
  && 
  \rho(1)&{}=\tikztrans {q_{-}}{\{1\}} {x_2} {r_{3}}{r_{4}},
  &
  \rho(00) = \rho(01)&{}=\tikzleaftrans {r_{0}}{\{1\}} {0},\\
  &&
  \rho(10)&{}=\tikzleaftrans {r_{3}}{\{1\}} {-\tfrac{a_1}{\sqrt{2}}},
  &
  \rho(11)&{}=\tikzleaftrans {r_{4}}{\{1\}} {-\tfrac{a_0}{\sqrt{2}}}.
 \hspace{-14mm}
\end{flalign*}
%
%
%
A run $\rho$ is \emph{accepting} if $\topof{\rho(\epsilon)}\in\rootstates$ and
all transitions used at the same level share some common choice, i.e.,
$\forall 0\le i\le n\colon\bigcap\{\ell(\delta) \mid \delta\in\{\rho(v) \mid v\in\{0,1\}^i\}\}\neq\emptyset$
(this is the essential difference from standard tree automata that gives LSTAs
the power to compactly represent some classes of quantum states).
The \emph{language} of~$\aut$, denoted as $\langof \aut$ is then a~set of trees~$T$
over $\mathbb N \cup \symterm$ such that there exists an accepting run
of~$\aut$ over~$T$.
Given a~tree~$T$ over $\nat \cup \complex$ and an assignment $\sigma\colon \vars \to \complex$, we use~$T[\sigma]$ to denote the tree obtained from~$T$ by
\begin{inparaenum}[(i)]
  \item  substituting all occurrence of variables~$x \in \vars$ in~$T$ by $\sigma(x)$ and
  \item  evaluating all terms in the resulting tree into values $c \in \complex$.
\end{inparaenum}
%


\decisiontreecontrolled

\hide{
is an expression generated via the grammar below and $\phi$ is a~constraint over $\vars\cup \complex$.
\begin{align*}
  \mathit{spec} ::={} & \mathit{state} \bigm| \exists i\in \{0,1\}^n\colon \mathit{state} \bigm| \mathit{spec}\vee \mathit{state}\\
  \mathit{state} ::={} & \specpair {c_1} {t}+ \ldots+ \specpair{c_k}{t} + \specdef t \bigm|
                \specpair i t + \specdef t \bigm|
                \mathit{state}\otimes \mathit{state}\\
  &n\in \mathbb{N}, c_1, \ldots, c_k \in \{0,1\}^n, t\in \symterm.
\end{align*}
%
Intuitively, HSL is similar to standard \emph{Dirac notation} with an extension for $\ket{*}$ to denote the basis states that are not explicitly mentioned and allows describing sets of states using \emph{quantifiers}, \emph{disjunctions}, and \emph{terms} in amplitudes. }


\vspace{-3.0mm}
\section{Overview}\label{sec:overview}
\vspace{-2.0mm}

\newcommand{\figIf}[0]{
\begin{wrapfigure}[7]{r}{54mm}
\vspace*{-8mm}
\hspace*{-2mm}
\scalebox{0.9}{
\begin{minipage}{59mm}
\begin{algorithm}[H]
  \caption{``$-X_2$'' if $M_1 = 1$}
  \label{alg:if}
  {\color{gray} Pre: $\{\specpair{10}{a_0}+ \specpair{11}{a_1}\}$\;}
  $H_1;CX^1_2$\;
  \lIf{$M_1=0$}{\{$X_1$\}\label{ln:ifmeasure}}
  {\color{gray} Post: $\{\specpair{10}{a_0}+ \specpair{11}{a_1},$}\\
  \hspace*{7.1mm}\color{gray}{$\specpair{10}{{-}a_1}- \specpair{11}{a_0})\}$\;}
\end{algorithm}
\end{minipage}
}
\end{wrapfigure}
}

\newcommand{\figWhile}[0]{
\begin{wrapfigure}[7]{r}{68mm}
\vspace*{-8mm}
\hspace*{-3mm}
\scalebox{0.9}{
\begin{minipage}{76mm}
\begin{algorithm}[H]
  \caption{``$-X_2$''}
  \label{alg:loop}
{\color{gray} Pre: $\{\specpair{10}{a_0}+ \specpair{11}{a_1}\}$\;}
  $H_1;CX^1_2$\;
{\color{gray} Inv: $\{\specpair{00}{\frac{a_0}{\sqrt{2}}}+
  \specpair{01}{\frac{a_1}{\sqrt{2}}}- \specpair{10}{\frac{a_1}{\sqrt{2}}}-
  \specpair{11}{\frac{a_0}{\sqrt{2}}}\}$\;}
  \lWhile{$M_1=0$}{\{$X_1;H_1;CX^1_2$\}}
{\color{gray} Post: $\{\specpair{10}{\mathrm{-}a_1}- \specpair{11}{a_0}\}$\;}
\end{algorithm}
\end{minipage}
}
\end{wrapfigure}
}

\figIf
In this section, we provide an overview of automata-based quantum program
verification with a~running example (chosen for its simplicity).
In the example, the quantum program creates the effect of a non-standard quantum
gate ``$-X$'' (applying the $X$ gate and negating all amplitude values) using
the standard gates $X$, $H$, and $CX$ (\cref{alg:if}). The program operates on
a~2-qubit system and performs the ``$-X$'' gate on the second qubit when the first qubit is measured to be~1; otherwise, (the first qubit
is~0 after measurement) the state stays unchanged.

For all $a_0,\ a_1\in\mathbb C$, we verify \cref{alg:if} against the precondition $\{\specpair{10}{a_0}+
\specpair{11}{a_1}\}$, which allows the state with the first qubit
$\ket{1}$ and the second qubit $a_0\ket{0}+a_1\ket{1}$, and the postcondition
$\{\specpair{10}{a_0}+ \specpair{11}{a_1},\ \specpair{10}{{-}a_1}- \specpair{11}{a_0}\}$,
which includes the original state and the state after the ``$-X_2$'' gate.

Our approach first constructs two LSTAs $P$ and $Q$ that can recognize the
states (binary trees) of the pre- and post-conditions, respectively, and then computes another \lsta by executing the gates $H_1;CX^1_2$ from $P$ (see 
\cref{fig:initstate} for the only quantum state accepted by $P$) with the gate algorithm introduced in~\cite{popl}.
This results in an LSTA $P_1$ that recognizes the state shown in~\cref{fig:H1CX12state}. After applying the operator $M_1$ to measure $x_1$ (\lnref{ln:ifmeasure}), $P_1$ splits into two copies. One copy, $P_2$, accepts the only quantum state shown in \cref{fig:M0state}, where the first qubit is measured to be $0$. The other copy, $P_3$, accepts the only quantum state shown in \cref{fig:M1state}, where the first qubit is measured to be $1$.

It is important to note that the probability amplitudes of the quantum states from
\cref{fig:M0state,fig:M1state} have not been normalized yet.
To do that, we need to multiply all leaves with the normalization factor~$\sqrt{2}$.
This will ensure that the square sum of their absolute amplitude values
becomes~1.
Although the quantum states are not yet normalized, we, however, still have sufficient
information to obtain the corresponding normalized states.
In~\autoqq, we choose to ignore all normalization factors and design a new
entailment testing algorithm~(\cref{sec:inclusion}) that can detect the
equivalence of two non-normalized states.
After both the true and false paths of the \textbf{if} statement in the example are
processed, we obtain two LSTAs $P_2$ and $P_3$ capturing all reachable states.
We then construct their union and test if all states in the union are included
in the post-condition (recognized by $Q$) by testing entailment.

\figWhile
A~drawback of \cref{alg:if} is that the desired effect ``$-X$'' manifests only
if $M_1 = 1$.
In the case $M_1 = 0$, we need to run the same
algorithm again until we get a measurement of 1.
To achieve this, we can use a~\textbf{while} loop statement, as shown
in~\cref{alg:loop}.
The loop allows us to repeatedly execute the same branch statement
until the desired outcome is achieved.

To verify that the loop works correctly, we require the user to provide a loop
invariant in the form of an LSTA.
The invariant here is $\{\specpair{00}{\frac{a_0}{\sqrt{2}}}+
  \specpair{01}{\frac{a_1}{\sqrt{2}}}- \specpair{10}{\frac{a_1}{\sqrt{2}}}-
  \specpair{11}{\frac{a_0}{\sqrt{2}}}\}$ (cf.~\cref{fig:H1CX12state}).
The verification process then involves checking if the invariant is
\emph{inductive}, covers all reachable states before entering the loop, and does
not contain any state that would violate the post-condition. More details on the verification process will be given in \cref{sec:loop}. With the loop invariant provided, we can ensure that the algorithm ends up with a~state where the ``$-X$'' gate is applied to the second qubit when it terminates.

 \begin{wrapfigure}[12]{r}{0.4\textwidth}
\centering
\vspace{-6mm}
\resizebox{\linewidth}{!}{
  \begin{tikzpicture}[>=stealth',node distance=20mm,rotate=37]
  \pgfsetlinewidth{1bp}
  \tikzstyle{bddnode}=[draw,rectangle,rounded corners=2mm]
  \tikzstyle{bddleaf}=[]
  \tikzstyle{trans}=[->,>=stealth']
  \tikzstyle{translow}=[->,>=stealth',dashed]
  \tikzstyle{stick}=[-,>=stealth']
  \tikzstyle{hidtrans}=[]
  \tikzstyle{ark}=[]
  \tikzstyle{blueark}=[fill=blue,opacity=0.2]
  \tikzstyle{redark}=[fill=red,opacity=0.6]

  \tikzstyle{outp}=[scale=0.75,fill=black!30,inner sep=0.6mm]

  \tikzstyle{bddnodex}=[bddnode,inner sep=1mm]

  \node[bddnodex] (p) {$p$};
  \node[right of=p,xshift=-10mm] (root) {};
  \node[bddnodex,below left of=p,yshift=-5mm] (q+) {$q_+$};
  \node[bddnodex,below of=p,yshift=0.5mm] (q0) {$q_0$};
  \node[bddnodex,below right of=p,yshift=-5mm] (q-) {$q_-$};
  \node[bddnodex,below left of=q+,yshift=-5mm] (r1) {$r_1$};
  \node[bddnodex,below of=q+,yshift=0mm] (r2) {$r_2$};
  \node[bddnodex,below right of=q-,yshift=-5mm] (r4) {$r_4$};
  \node[bddnodex,below of=q-,yshift=0mm] (r3) {$r_3$};
  \node[bddnodex,below right of=q+,yshift=-5mm] (r0) {$r_0$};
  \node[bddleaf, below of=r1,yshift=11mm] (r1a0) {$\frac{a_0}{\sqrt{2}}$};
  \node[bddleaf, below of=r2,yshift=11mm] (r2a0) {$\frac{a_1}{\sqrt{2}}$};
  \node[bddleaf, below of=r0,yshift=11mm] (r0a) {$0$};
  \node[bddleaf, below of=r3,yshift=11mm] (r3a0) {$\frac{-a_1}{\sqrt{2}}$};
  \node[bddleaf, below of=r4,yshift=11mm] (r4a0) {$\frac{-a_0}{\sqrt{2}}$};

  \draw (p) coordinate[xshift=-5mm,yshift=-5mm] (pa);
  \draw (p) coordinate[xshift= 5mm,yshift=-5mm] (pb);

  \draw (q+) coordinate[xshift=0mm,yshift=-6mm] (q+a);
  \draw (q-) coordinate[xshift=0mm,yshift=-6mm] (q-a);

  \draw (q0) coordinate[xshift=0mm,yshift=-6mm] (q0a);


  \draw[trans,-] (p)
    to
    (pa);
  \draw[trans] (pa)
    to[bend right=10]
    node[pos=0,left,xshift=-1mm,yshift=1mm] {$\{1\}$}
    coordinate[pos=0.45] (pa1)
    (q+);
  \draw[translow] (pa)
    to[bend left=10]
    coordinate[pos=0.45] (pa2)
    (q0);
  \draw[translow] 
    (pb)
    to[bend right=10]
    coordinate[pos=0.45] (pb1)
    (q0);
  \draw[trans] (p) to 
    node[pos=0.9,right,xshift=+1mm,yshift=1mm] {$\{2\}$}
    (pb)
    to[bend left=10]
    coordinate[pos=0.45] (pb2)
    (q-);

  \filldraw[blueark] (pa) to[bend right=5] (pa1) to[bend right=50] (pa2) to[bend right=5] cycle;
  \filldraw[blueark] (pb) to[bend right=5] (pb1) to[bend right=50] (pb2) to[bend right=5] cycle;
  \node at (pa) [xshift=-0mm,yshift=-4mm] {$x_1$};
  \node at (pb) [xshift=+1mm,yshift=-4mm] {$x_1$};

  \draw[translow] (q+a)
    to[bend right]
    coordinate[pos=0.5] (q+a1)
    (r1);

  \draw[trans] (q+) to 
    node[pos=0.9,right,xshift=0mm,yshift=1mm] {$\{1\}$}
    (q+a)
    to
    coordinate[pos=0.6] (q+a2)
    (r2);

  \filldraw[blueark] (q+a) to[bend right=15] (q+a1) to[bend
  right=30] (q+a2) to cycle;
  \node at (q+a) [xshift=-3mm,yshift=-3mm] {$x_2$};
  \node at (q-a) [xshift=+3mm,yshift=-3mm] {$x_2$};

  \draw[translow] (q-a) to 
    node[pos=0.9,left,xshift=0mm,yshift=1mm] {$\{1\}$}
    (q-a)
    to
    coordinate[pos=0.6] (q-a1)
    (r3);

  \draw[trans] (q-) to 
    (q-a)
    to[bend left]
    coordinate[pos=0.5] (q-a2)
    (r4);
  \filldraw[blueark] (q-a) to[bend left=15] (q-a2) to[bend
  left=30] (q-a1) to cycle;
  



  \draw[translow] (q0a)
    to[bend right]
    coordinate[pos=0.6] (q1a1)
    (r0);

  \draw[trans] (q0) to 
    node[pos=0.9,left,xshift=0mm,yshift=1mm] {$\{1\}$}
    (q0a)
    to[bend left]
    coordinate[pos=0.6] (q1a2)
    (r0);

  \filldraw[blueark] (q0a) to[bend right=15] (q1a1) to[bend right=30] (q1a2) to[bend right=15] cycle;
  \node at (q0a) [xshift=0mm,yshift=-4mm] {$x_2$};




  \draw[trans] (root) to (p);
  \draw[stick] (r1) to node[left,xshift=-1mm] {$\{1\}$} (r1a0);
  \draw[stick] (r2) to node[left,xshift=-1mm] {$\{1\}$} (r2a0);
  \draw[stick] (r0) to node[left, xshift=-1mm,yshift=0mm] {$\{1\}$} (r0a);
  \draw[stick] (r3) to node[left,xshift=-1mm] {$\{1\}$} (r3a0);
  \draw[stick] (r4) to node[left,xshift=-1mm] {$\{1\}$} (r4a0);
  
\end{tikzpicture}
}
\vspace{-6mm}
 \caption{The \lsta recognizing the postcondition of~\cref{alg:if}.}
 \label{fig:algifpost}
 \end{wrapfigure} 

In \autoqq, preconditions, postconditions, and invariants are represented as
sets of quantum states, encoded using the LSTA model.
Therefore, it is important for users to understand how to encode a set of
quantum states with an LSTA.
Below, we provide two examples to give a basic understanding of the process.
In the first example, we show how to encode the postcondition of~\cref{alg:if}, $\{\specpair{10}{a_0}+ \specpair{11}{a_1},\ \specpair{10}{\mathrm{-}a_1}- \specpair{11}{a_0}\}$. 

The corresponding LSTA is shown in~\cref{fig:algifpost}. The LSTA constructs
trees that depict quantum states beginning from the initial state $p$ at the
root. It continues to build the tree by choosing transitions to explore new
child states at each step, and this process continues until it reaches the
leaves. For instance, the tree in~\cref{fig:M0state} can be generated by first
picking the transition \tikztrans{p}{\{1\}}{x_1}{q_{+}}{q_0}, then the two
transitions
\tikztrans{q_+}{\{1\}}{x_2}{r_1}{r_2} and \tikztransone{q_0}{\{1\}}{x_2}{r_0}\ ,
and ending with the three leaf transitions
\tikzleaftrans{r_1}{\{1\}}{\nicefrac{a_0}{\sqrt{2}}},
\tikzleaftrans{r_2}{\{1\}}{\nicefrac{a_1}{\sqrt{2}}}, and
\tikzleaftrans{r_0}{\{1\}}{0}.
Similar to the conventional tree automata (TAs)
model, LSTAs utilize disjunctive branches to represent various states that
share a common structure. In~\cref{fig:algifpost}, the state $p$ has two
possible outgoing transitions. If one picks the other transition
\tikztrans{p}{\{2\}}{x_1}{q_0}{q_-} at the beginning, we can generate the
tree shown in~\cref{fig:M1state}.

\begin{figure}[t]\centering
\begin{subfigure}{0.6\textwidth}
\scalebox{.8}{  
  \begin{tikzpicture}[>=stealth',node distance=20mm,rotate=37]

  \pgfsetlinewidth{1bp}
  \tikzstyle{bddnode}=[draw,rectangle,rounded corners=2mm]
  \tikzstyle{bddleaf}=[]
  \tikzstyle{trans}=[->,>=stealth']
  \tikzstyle{translow}=[->,>=stealth',dashed]
  \tikzstyle{stick}=[-,>=stealth']
  \tikzstyle{hidtrans}=[]
  \tikzstyle{ark}=[]
  \tikzstyle{blueark}=[fill=blue,opacity=0.2]
  \tikzstyle{redark}=[fill=red,opacity=0.6]

  \tikzstyle{outp}=[scale=0.75,fill=black!30,inner sep=0.6mm]

  \tikzstyle{bddnodex}=[bddnode,inner sep=1mm]


  \node[bddnodex] (p) {$p$};
  \node[right of=p,xshift=-10mm] (root) {};
  \node[bddnodex,below left of=p,yshift=-5mm] (q+) {$q_+$};
  \node[bddnodex,below right of=p,yshift=-5mm] (q+-) {$q_\pm$};
  \node[bddnodex,below left of=q+,yshift=-5mm] (r+) {$r_+$};
  \node[bddnodex,below right of=q+,yshift=-5mm] (r0) {$r_0$};
  \node[bddnodex,below right of=q+-,yshift=-5mm] (r+-) {$r_\pm$};
  \node[bddleaf, below of=r+,yshift=11mm] (r+a) {$\frac{1}{\sqrt{2}}$};
  \node[bddleaf, below of=r0,yshift=11mm] (r0a) {$0$};
  \node[bddleaf, below of=r+-,xshift=-5mm,yshift=11mm] (r+-a) {$\frac{1}{\sqrt{2}}$};
  \node[bddleaf, below of=r+-,xshift= 5mm,yshift=11mm] (r+-b) {$\frac{-1}{\sqrt{2}}$};

  \draw (p) coordinate[xshift=-0mm,yshift=-5mm] (pa);
  \draw (p) coordinate[xshift= 5mm,yshift=-5mm] (pb);

  \draw (q+) coordinate[xshift=-5mm,yshift=-5mm] (q+a);
  \draw (q+) coordinate[xshift= 5mm,yshift=-5mm] (q+b);

  \draw (q+-) coordinate[xshift=-5mm,yshift=-5mm] (q+-a);
  \draw (q+-) coordinate[xshift= 5mm,yshift=-5mm] (q+-b);


  \draw[translow] (pa)
    to[bend right=10]
    coordinate[pos=0.45] (pa1)
    (q+);

  \draw[trans] (p) to 
    node[pos=0.9,left,xshift=-1mm,yshift=1mm] {$\{1\}$}
    (pa)
    to[bend left=10]
    coordinate[pos=0.45] (pa2)
    (q+-);

  \filldraw[blueark] (pa)
    to[bend right=5] (pa1)
    to[bend right=50] (pa2)
    to[bend right=5] cycle;
  \node at (pa) [xshift=-0mm,yshift=-4mm] {$x_1$};

  \draw[translow] (q+a)
    to[bend right]
    coordinate[pos=0.6] (q+a1)
    (r+);

  \draw[trans] (q+) to 
    node[pos=0.9,left,xshift=0mm,yshift=2mm] {$\{1\}$}
    (q+a)
    to[bend right]
    coordinate[pos=0.3] (q+a2)
    (r0);

  \filldraw[blueark] (q+a) to[bend right=15] (q+a1) to[bend
  right=30] (q+a2) to[bend left=10] cycle;
  \node at (q+a) [xshift=-2mm,yshift=-4mm] {$x_2$};

  \draw[trans] (q+) to 
    node[pos=0.9,right,xshift=0mm,yshift=2mm] {$\{2\}$}
    (q+b)
    to[bend left]
    coordinate[pos=0.3] (q+b1)
    (r+);

  \draw[translow] (q+b)
    to[bend left]
    coordinate[pos=0.6] (q+b2)
    (r0);

  \filldraw[blueark] (q+b)
    to[bend left=10] (q+b1)
    to[bend right=30] (q+b2)
    to[bend right=15] cycle;
  \node at (q+b) [xshift=2mm,yshift=-4mm] {$x_2$};

  \draw[translow] (q+-a)
    to[bend right]
    coordinate[pos=0.6] (q1a1)
    (r0);

  \draw[trans] (q+-) to 
    node[pos=0.9,left,xshift=0mm,yshift=2mm] {$\{1\}$}
    (q+-a)
    to[bend right]
    coordinate[pos=0.3] (q1a2)
    (r+-);

  \filldraw[blueark] (q+-a)
    to[bend right=15] (q1a1)
    to[bend right=30] (q1a2)
    to[bend left=10] cycle;
  \node at (q+-a) [xshift=-2mm,yshift=-4mm] {$x_2$};

  \draw[trans] (q+-) to 
    node[pos=0.9,right,xshift=0mm,yshift=2mm] {$\{2\}$}
    (q+-b)
    to[bend left]
    coordinate[pos=0.3] (q1b1)
    (r0);

  \draw[translow] (q+-b)
    to[bend left]
    coordinate[pos=0.6] (q1b2)
    (r+-);

  \filldraw[blueark] (q+-b)
    to[bend left=10] (q1b1)
    to[bend right=30] (q1b2)
    to[bend right=15] cycle;
  \node at (q+-b) [xshift=2mm,yshift=-4mm] {$x_2$};

  \draw[trans] (root) to (p);
  \draw[stick] (r+) to node[right,xshift= 1mm] {$\{1,2\}$} (r+a);
  \draw[stick] (r0) to node[left, xshift=-1mm,yshift=1mm] {$\{1,2\}$} (r0a);
  \draw[stick] (r+-) to node[left,xshift=-1mm] {$\{1\}$} (r+-a);
  \draw[stick] (r+-) to node[right,xshift=1mm] {$\{2\}$} (r+-b);
  
\end{tikzpicture}
 }

\caption{LSTA}
\label{fig:bellstates}
\end{subfigure}
\begin{subfigure}{0.34\textwidth}

\begin{mdframed}\centering
\begin{Verbatim}[fontsize=\scriptsize]
Constants
c+ := 1 / sqrt2
c0 := 0
c- := -1 / sqrt2
Root States p
Transitions
[1,{1}](q+, q+-) -> p
[2,{1}](r+, r0) -> q+
[2,{2}](r0, r+) -> q+
[2,{1}](r0, r+-) -> q+-
[2,{2}](r+-, r0) -> q+-
[c+,{1,2}] -> r+
[c0,{1,2}] -> r0
[c+,{1}] -> r+-
[c-,{2}] -> r+-
\end{Verbatim}

\end{mdframed}\vspace{-0.8\baselineskip}

\caption{The \texttt{.lsta} file.}
\label{lsta:fig4}
\end{subfigure}\vspace{-0.5\baselineskip}
\caption{The \lsta for Bell states and its textual description}
\end{figure}

The previous example does not fully demonstrate why incorporating a set of choices (the numbers in the curly brackets) into the design of \lstas is beneficial. Let us consider another well-known example: the set of Bell states $\{\ket{00}\pm\ket{11}, \ket{01}\pm\ket{10}\}$, generated by the \lsta in~\cref{fig:bellstates}. Without the restriction that all transitions at the same level must share a common choice, this \lsta would generate eight different trees (since $q_+$, $q_\pm$, and $r_\pm$ each have two outgoing transitions), which correspond to the quantum states $\{\ket{00}\pm\ket{11}, \ket{01}\pm\ket{10}, \ket{00}\pm\ket{10}, \ket{01}\pm\ket{11}\}$. However, only four of these trees correspond to the Bell states, meaning the others are unintended. The \lsta uses the labeled choices to rule out the unintended trees.
More specifically, at level~2, the two transitions labeled $\{1\}$ can be used
simultaneously, as they share the common choice~$1$.
Similarly, the two transitions labeled $\{2\}$ can be used together due to
their common choice~$2$.
In contrast, a~transition labeled $\{1\}$ cannot be used alongside one labeled
${2}$, as their choice sets are disjoint.
At level~3, the transitions from $r_\pm$ can be used freely with those
from~$r_+$ and~$r_0$, since $\emptyset \subseteq \{1\}$ and $\{2\} \subseteq
\{1,2\}$.
There are two valid combinations of transitions at levels~2 and~3, and this
\lsta generates exactly the four Bell states using the nine transitions shown
in the figure.
The corresponding \texttt{.lsta} file, which illustrates the input format for
\autoqq, is shown in~\cref{lsta:fig4}.
In \texttt{.lsta} files, transitions are labeled with pairs 
$[a,b]$, where $a$ indicates the symbol $x_a$ and $b$ is the set of
choices.




\newcommand{\ifalg}{%
\begin{algorithm*}[H]
  {\color{gray} Pre: $\{\specpair{10}{a_0}+ \specpair{11}{a_1}\}$}\;
  $H_1;CX^1_2$\;
  \lIf{$M_1=0$}{$X_1$}
  {\color{gray} Post: $\{\specpair{10}{a_0}+ \specpair{11}{a_1}\vee (\specpair{10}{\mathrm{-}a_1}+ \specpair{11}{\mathrm{-}a_0}+ \specdef{0})\}$}
\end{algorithm*}}

\newcommand{\loopalg}{%
\begin{algorithm*}[H]

{\color{gray} Pre: $\{\specpair{10}{a_0}+ \specpair{11}{a_1}\}$}\;
  $H_1;CX^1_2$\;
{\color{gray} Inv: $\{(\specpair{00}{\frac{a_0}{\sqrt{2}}}+ \specpair{01}{\frac{a_1}{\sqrt{2}}}+ \specpair{10}{\frac{-a_1}{\sqrt{2}}}+ \specpair{11}{\frac{-a_0}{\sqrt{2}}})\}$}\;
  \lWhile{$M_1=0$}{$X_1;H_1;CX^1_2$}
{\color{gray} Post: $\{(\specpair{10}{\mathrm{-}a_1}+ \specpair{11}{\mathrm{-}a_0}+ \specdef{0})\}$}
\end{algorithm*}}

%

\vspace{-3.0mm}
\section{System Architecture}\label{sec:arch}
\vspace{-2.0mm}
\newcommand{
\begin{figure}[t]
\centering
\scalebox{1}{\begin{tikzpicture}[->,>=stealth,scale=1,transform shape,semithick]
\tikzstyle{every state}=[inner sep=3pt,minimum size=5pt]

  \tikzstyle{component}=[draw,rounded corners=2mm, inner sep=8pt,text width=5.5cm,align=center, node distance=1.5cm,fill=white]

  \node[component] (s2ta) {\textsf{Preprocessor}};
  \node[above of=s2ta] (title) {\large\autoqq};
  \node[component, below = 0.6 cm of s2ta,text width=3.6cm] (exec) {
\textsf{Program Executor~\cite{chen2023autoq}\\(\cref{sec:branch})}\\[2mm]
  \textsf{Invariant Checker\\(\cref{sec:loop})}}; 
  \node[component, below = 0.6 cm of exec] (incl) {\textsf{Entailment
  Checker~(\cref{sec:inclusion})}};

  \node[left of=exec,text width=3cm, align=left, node distance=4.6cm, yshift=10mm] (input) {\vbox{
    \textbf{Precondition:}\\
    $\autp$\texttt{.lsta}\\
    \vspace{0.2cm}
    \textbf{Quantum \\Program:}\\
    $\mathit{Prog}$\texttt{.qasm}\\\vspace{0.2cm}
    \textbf{Postcondition:}\\
    $\autq$\texttt{.lsta}\\\vspace{0.2cm}
    \textbf{Loop Invariants:}\\
    $\auti_L$\texttt{.lsta}\\
    for each loop $L$
  }};
  \node[component, left of=incl,text width=1.5cm,
  xshift=-3.3cm] (ext) {\textsf{Z3}~\cite{de2008z3}};


  \node[below of=incl, yshift=-1mm] (incl-output) {Verified/Failed};

  \draw[->] ($(s2ta.south)-(2.25cm,0)$)
  --($(incl.north)-(2.25cm,0)$)node[midway,left]{$\autq$};
  \draw[<-] (s2ta.west) -- ++(-10mm,0);
  \draw[->] (s2ta.south) -- (exec)node[midway,right]{$\autp, \mathit{Prog}$};
  \draw[->] (exec.south) -- (incl)node[midway,right]{$\autr$};
  \draw[<->] (incl) -- (ext);
  \draw[->] (incl) -- (incl-output);

  \begin{pgfonlayer}{background}
    \draw[-,dashed,rectangle,fill=gray!20,draw=black!70,rounded corners=5pt,inner sep=2pt]
      ([xshift=-4mm]s2ta.west) |- ([yshift=10mm]s2ta.north) -| ([xshift=4mm,yshift=0mm]incl.east) |- ([yshift=-2mm]incl.south west) -| ([xshift=-4mm]s2ta.west);
  \end{pgfonlayer}

\end{tikzpicture}}
\caption{The architecture of \autoqq}
\label{fig:arch}
\end{figure}
}[0]{
\begin{figure}[t]
\centering
\scalebox{1}{\begin{tikzpicture}[->,>=stealth,scale=1,transform shape,semithick]
\tikzstyle{every state}=[inner sep=3pt,minimum size=5pt]

  \tikzstyle{component}=[draw,rounded corners=2mm, inner sep=8pt,text width=5.5cm,align=center, node distance=1.5cm,fill=white]

  \node[component] (s2ta) {\textsf{Preprocessor}};
  \node[above of=s2ta] (title) {\large\autoqq};
  \node[component, below = 0.6 cm of s2ta,text width=3.6cm] (exec) {
\textsf{Program Executor~\cite{chen2023autoq}\\(\cref{sec:branch})}\\[2mm]
  \textsf{Invariant Checker\\(\cref{sec:loop})}}; 
  \node[component, below = 0.6 cm of exec] (incl) {\textsf{Entailment
  Checker~(\cref{sec:inclusion})}};

  \node[left of=exec,text width=3cm, align=left, node distance=4.6cm, yshift=10mm] (input) {\vbox{
    \textbf{Precondition:}\\
    $\autp$\texttt{.lsta}\\
    \vspace{0.2cm}
    \textbf{Quantum \\Program:}\\
    $\mathit{Prog}$\texttt{.qasm}\\\vspace{0.2cm}
    \textbf{Postcondition:}\\
    $\autq$\texttt{.lsta}\\\vspace{0.2cm}
    \textbf{Loop Invariants:}\\
    $\auti_L$\texttt{.lsta}\\
    for each loop $L$
  }};
  \node[component, left of=incl,text width=1.5cm,
  xshift=-3.3cm] (ext) {\textsf{Z3}~\cite{de2008z3}};


  \node[below of=incl, yshift=-1mm] (incl-output) {Verified/Failed};

  \draw[->] ($(s2ta.south)-(2.25cm,0)$)
  --($(incl.north)-(2.25cm,0)$)node[midway,left]{$\autq$};
  \draw[<-] (s2ta.west) -- ++(-10mm,0);
  \draw[->] (s2ta.south) -- (exec)node[midway,right]{$\autp, \mathit{Prog}$};
  \draw[->] (exec.south) -- (incl)node[midway,right]{$\autr$};
  \draw[<->] (incl) -- (ext);
  \draw[->] (incl) -- (incl-output);

  \begin{pgfonlayer}{background}
    \draw[-,dashed,rectangle,fill=gray!20,draw=black!70,rounded corners=5pt,inner sep=2pt]
      ([xshift=-4mm]s2ta.west) |- ([yshift=10mm]s2ta.north) -| ([xshift=4mm,yshift=0mm]incl.east) |- ([yshift=-2mm]incl.south west) -| ([xshift=-4mm]s2ta.west);
  \end{pgfonlayer}

\end{tikzpicture}}
\caption{The architecture of \autoqq}
\label{fig:arch}
\end{figure}
}
We present the architecture of \autoqq in~\cref{fig:arch}. The tool is written
in C++ and uses the SMT solver~\ziii~\cite{de2008z3} for satisfiability and
entailment checking of constraints. We
allow the use of any theory that is supported by~\ziii. In our experiments, we
used \texttt{NIRA} (non-linear integer and real arithmetic).
While this logic is generally undecidable, \ziii~always quickly solved the
formulae we presented to it in our experiments.

\begin{figure}[t]
\centering
\scalebox{1}{\begin{tikzpicture}[->,>=stealth,scale=1,transform shape,semithick]
\tikzstyle{every state}=[inner sep=3pt,minimum size=5pt]

  \tikzstyle{component}=[draw,rounded corners=2mm, inner sep=8pt,text width=5.5cm,align=center, node distance=1.5cm,fill=white]

  \node[component] (s2ta) {\textsf{Preprocessor}};
  \node[above of=s2ta] (title) {\large\autoqq};
  \node[component, below = 0.6 cm of s2ta,text width=3.6cm] (exec) {
\textsf{Program Executor~\cite{chen2023autoq}\\(\cref{sec:branch})}\\[2mm]
  \textsf{Invariant Checker\\(\cref{sec:loop})}}; 
  \node[component, below = 0.6 cm of exec] (incl) {\textsf{Entailment
  Checker~(\cref{sec:inclusion})}};

  \node[left of=exec,text width=3cm, align=left, node distance=4.6cm, yshift=10mm] (input) {\vbox{
    \textbf{Precondition:}\\
    $\autp$\texttt{.lsta}\\
    \vspace{0.2cm}
    \textbf{Quantum \\Program:}\\
    $\mathit{Prog}$\texttt{.qasm}\\\vspace{0.2cm}
    \textbf{Postcondition:}\\
    $\autq$\texttt{.lsta}\\\vspace{0.2cm}
    \textbf{Loop Invariants:}\\
    $\auti_L$\texttt{.lsta}\\
    for each loop $L$
  }};
  \node[component, left of=incl,text width=1.5cm,
  xshift=-3.3cm] (ext) {\textsf{Z3}~\cite{de2008z3}};


  \node[below of=incl, yshift=-1mm] (incl-output) {Verified/Failed};

  \draw[->] ($(s2ta.south)-(2.25cm,0)$)
  --($(incl.north)-(2.25cm,0)$)node[midway,left]{$\autq$};
  \draw[<-] (s2ta.west) -- ++(-10mm,0);
  \draw[->] (s2ta.south) -- (exec)node[midway,right]{$\autp, \mathit{Prog}$};
  \draw[->] (exec.south) -- (incl)node[midway,right]{$\autr$};
  \draw[<->] (incl) -- (ext);
  \draw[->] (incl) -- (incl-output);

  \begin{pgfonlayer}{background}
    \draw[-,dashed,rectangle,fill=gray!20,draw=black!70,rounded corners=5pt,inner sep=2pt]
      ([xshift=-4mm]s2ta.west) |- ([yshift=10mm]s2ta.north) -| ([xshift=4mm,yshift=0mm]incl.east) |- ([yshift=-2mm]incl.south west) -| ([xshift=-4mm]s2ta.west);
  \end{pgfonlayer}

\end{tikzpicture}}
\caption{The architecture of \autoqq}
\label{fig:arch}
\end{figure}

Similar to verifiers for classical programs, in order to use \autoqq, the user
needs to provide the following:
%
\begin{inparaenum}[(i)]
    \item a~\emph{quantum program} in the \openqasm 3.0 format,
    \item \emph{pre- and post-conditions} in the \texttt{.lsta} format along with SMT formulae in the \texttt{.smt} format to constrain the terms, and
    \item \emph{invariant} for each loop in the \texttt{.lsta} format together with an SMT formula.
\end{inparaenum}
%
The specification and invariant for each loop can be written as an \lsta
(an~\texttt{.lsta} file).
Once these files are provided, \autoqq will process them and report either
``Verified'' or ``Failed''.

Compared to \autoq, there are several major changes in \autoqq.
Firstly, it features a new input interface to facilitate the use of quantum
programs (instead of only circuits) and uses \lsta as the back-end model (instead of standard tree automata).
Additionally, \textsf{Program Executor} now supports measurement and
branch statements.
Another significant addition is the new \textsf{Invariant
Checker} component, which handles loop invariants.
The \textsf{Invariant Checker}
also uses the \textsf{Entailment Checker} to verify the \emph{inductiveness} of
the invariant, which we do not explicitly show in the figure.

\newcommand{\algMeasurement}[0]{
\begin{algorithm}[t]
\SetNoFillComment
\KwIn{\lsta $\aut=\tuple{Q, \Sigma, \Delta, \rootstates, \globconstr}$, target qubit $x_i$, measurement outcome $b$}
\KwOut{The \lsta $M_i^{=b}(\aut)$}
$Q':=\{q'\mid q\in Q\};\ \rootstates':=\{q'\mid q\in \rootstates\}$\;
$\ctr{\Delta'}{\ell'}:= \{\ctranstreenoset {q'} f {q_1',q_2'}{C} \mid \ctranstreenoset {q}{f}{q_1, q_2}{C}\in \ctr{\Delta}{\ell}\} \cup \{\ctranstreenoset{q'}{0}{}{C} \mid \ctranstreenoset{q}{f}{}{C}\in \ctr{\Delta}{\ell}\}$\;
  
$\Delta^{\neq{x_i}} := \{\ctranstreenoset{q}{f}{q_0,q_1}{C} \mid \ctranstreenoset{q}{f}{q_0,q_1}{C}\in \Delta \wedge f\neq x_i\}$\; 
\lIf{$b=0$}{
$\Delta^{={x_i}} := \{\ctranstreenoset{q}{x_i}{q_0,q'_1}{C} \mid \ctranstreenoset{q}{x_i}{q_0,q_1}{C}\in \Delta\}$
}
\hspace*{12mm}\lElse{
\hspace*{1mm}$\Delta^{={x_i}} := \{\ctranstreenoset{q}{x_i}{q'_0,q_1}{C} \mid \ctranstreenoset{q}{x_i}{q_0,q_1}{C}\in \Delta\}$
}
\Return {$M_i^{=b}(\aut)=\tuple{Q\cup Q', \Sigma,
	\Delta^{={x_i}}\cup \Delta^{\neq{x_i}}\cup\Delta', \rootstates, \globconstr}$}\;
\caption{Algorithm for measurement}
\label{alg:measure}
\end{algorithm}
}

 \vspace{-3.0mm}
\section{Handling Branch, Measurement, and Loop}\label{sec:branch}
 \vspace{-2.0mm}
We will start by presenting the syntax of quantum programs that \autoqq can
handle and then informally describe their semantics.
We use a flavor of quantum programs that is similar to the one in
\cite{ying2012floyd}, which is captured by the following grammar:
$$
P~~ ::={}  ~~U~~\bigm| ~~P;P~~ \bigm| ~~\textbf{while\ } (M_i = b)~\textbf{do}~\{P\}
  ~~\bigm| ~~\textbf{if\ } (M_i = b)~\textbf{then}~\{P\}~\textbf{else}~\{P\}
$$
where $P$ is a quantum program, $U$ is a quantum gate annotated with its
control and target qubits (e.g., $CX_1^2$), $b\in\{0,1\}$, and $M_i$ is the measured value of
the $i$-th qubit.
\autoqq supports
standard non-parameterized quantum gates that allow (approximate) universal
computation~\cite{BoykinMPRV00,Aharonov03}, including Clifford gates ($H$, $S$,
and $\textit{CX}$), $T$, $Z$, $\mathit{SWAP}$, Toffoli, etc.
%

The execution of a quantum gate $U$ updates a quantum state (tree) in the
standard way~\cite{ChenCLLTY23}.
The language allows sequential composition ($P; P$) of gate operations,
branches (\textbf{if}\ \ldots \textbf{else}\ \ldots), and loops (\textbf{while}\ \ldots).
When using \textbf{if} and \textbf{while} statements, the condition $M_i = b$
(denoting that the value obtained from measuring~$x_i$ was~$b$) is used
to determine in which path to continue.

\vspace{-0.0mm}
\subsection{Handling Measurement}
\vspace{-0.0mm}

\algMeasurement  

The key part of handling branch statements in \autoqq is how
measurement changes the quantum states and how we should update the \lsta
capturing the set of reachable states.
As mentioned in~\cref{sec:overview}, if the
measured value of $x_i$ is $1$, then we should update the probability of the
$0$-subtrees below all $x_i$ nodes to $0$.
Examples can be found
in~\cref{fig:H1CX12state} (before measuring~$x_1$) and~\cref{fig:M0state,fig:M1state} (after
measuring~$x_1$ as~$0$ and~$1$ respectively).
An important design choice was that we do not normalize the
probability amplitudes and simply just make all leaves of one of the subtrees
zero in this step, leaving the task of \emph{matching non-normalized states} to the
 entailment test (cf.\ \cref{sec:inclusion}). 

In some cases, the measurement can generate an \lsta whose language contains
a~tree where all leaves are~$0$.
This can happen, e.g., when we compute the tree representing the quantum state
obtained from the state in \cref{fig:initstate} by measuring $x_1 = 0$.
We do not consider such a tree to represent a~quantum state.
To handle such cases, our entailment test $\autr \entailedbyuptosc \autq$ (formally defined in
\cref{sec:inclusion}) adds a \emph{$0$-labeled tree} to the language
of $\autq$ before the test.
We use $M_i^{=b}(\aut)$ to represent the \lsta obtained from $\aut$ after
measuring~$x_i$ for the outcome~$b$. 
The procedure for computing $M_i^{=b}(\aut)$ is given in~\cref{alg:measure}.

The goal of the algorithm is to update all leaf values of $\bar{b}$-subtrees
under $x_i$ to $0$, where $\bar{b} = 1 - b$.
Lines~1 and~2 of~\cref{alg:measure} create a primed copy of the input LSTA and update all leaf values to $0$ (Line~2). Lines~3 to~5 construct the new transition system: only transitions labeled with $x_i$ are modified (Lines~4 and~5), while others remain unchanged (Line 3).
The key steps are in Lines~4 and~5, which control
the subtrees of the measured qubit.
When $b = 0$ (Line~4), all leaves of the $1$-subtree are modified to $0$, and thus, we update $q_1$ in the original transition to $q_1'$ (symmetrically for $b=1$ in Line~5).

\vspace{-0.0mm}
\subsection{Handling Branch Statements}
\vspace{-0.0mm}

Given an \lsta $\aut$ that recognizes a set of quantum states, we can
precisely compute the set of states that result from executing a branch
statement \textbf{if\ } ($M_i = b$) \textbf{then} $\{P_1\}$ \textbf{else}
$\{P_0\}$ as follows (assuming that $P_0$ and $P_1$ do not involve loops):
\begin{inparaenum}[(i)]
    \item Create two \lstas $M_i^{=1}(\aut)$ and $M_i^{=0}(\aut)$.
    \item Compute the result after executing $P_b$ from $M_i^{=b}(\aut)$ for
      $b\in \{0,1\}$, following the gates' semantics and recursively trigger the
      procedure for branches.
      We use $\aut^0$ and $\aut^1$ to denote the \lstas after executing $P_0$ and
      $P_1$, respectively.
    \item Construct an \lsta recognizing the union of $\aut^0$ and $\aut^1$ and return it as the final result of this procedure. 
\end{inparaenum}
In principle, our approach can handle nested control flow.
We are, however, not aware of any real-world quantum program that uses a nested
control structure, and, therefore, for simplicity, \autoqq now only supports
programs with non-nested control flow.

\vspace{-0.0mm}
\subsection{Handling Loop Statements}\label{sec:loop}
\vspace{-0.0mm}

If we come across a loop statement \textbf{while} ($M_i = b$) \textbf{do} $\{B\}$ with $B$
being the loop body, we require the user to provide a loop invariant in the form
of an \lsta. We refer to the invariant as $\auti$; it needs to satisfy the following properties:
\begin{inparaenum}[(i)]
  \item  It contains all reachable states, captured by an \lsta $\autr$, before
    entering the loop. That is, $\autr\entailedbyuptosc \auti$.
  \item It is \emph{inductive}, i.e., $B(M_i^{=b}(\auti))\entailedbyuptosc
    \auti'$, where $B(\aut)$ denotes an \lsta recognizing the set of quantum states after executing $B$ from the quantum states in $\aut$ and $\auti'$ is an \lsta obtained from $\auti$ whose variables and constraints are updated to a~primed version.
\end{inparaenum}
The inductiveness guarantees that if we take any state accepted by $\auti$
and perform~$B$ on the state, the result will also be accepted by $\auti$.
Together with the condition that~$\auti$ covers all reachable states before
entering the loop, $\auti$~\emph{over-approximates} all reachable states at the
loop entrance.
We can, therefore, use $M_i^{\neq b}(\auti)$ to over-approximate all reachable states at
the loop exit. 

\newcommand{\algEntailment}[0]{
\begin{algorithm}[t]
  \SetKw{Continue}{continue}
  \caption{Checking $\autr \entailedbyuptosc \autq$ \wl{concrete version}}
  \label{alg:InclCheckConcrete}
  \KwIn{A TA $\autr = (S_\R, \Delta_\R, F_\R)$, a~TA
    $\autq = (S_\Q, \Delta_\Q, F_\Q)$}
  \KwOut{$\mathit{true}$ if $\autr \entailedbyuptosc \autq$, $\mathit{false}$ otherwise}

  $\processed \gets \emptyset$\;
  $\worklist \gets \{(s_\R, U_\Q) \mid \transleaf{s_\R}{v_\R} \in \Delta_\R,$\\
  \hspace*{8.7mm} $U_\Q = \{(u_\Q, c) \in S_\Q \times \realsext \mid
  \transleaf{u_\Q}{v_\Q} \in \Delta_\Q \land (v_\R \neq 0 \Rightarrow (v_\Q = c
  \cdot v_\R \land c \neq 0)) \land {}$\\[1mm]
  \hspace*{66.5mm} $(v_\R = 0 \Rightarrow (v_\Q = 0 \land c = ?))\}\}$\label{ln:workset}\;
  \While{$\worklist \neq \emptyset$}{
    $(s_\R, U_\Q) \gets \mathit{Worklist}.\mathit{pop}()$\label{ln:pop}\;
    \lIf{$s_\R \in F_\R \land \forall (u_q, c) \in U_\Q\colon u_q \notin F_\Q$}{\label{ln:roottest}
      \Return{$\mathit{false}$}
    }
    $\processed \gets \processed \cup \{(s_\R, U_\Q)\}$\;\label{ln:processed}

    $\mathit{tmp} \gets (\{(s_\R, U_\Q)\} \times \processed) \cup (\processed
      \times \{(s_\R, U_\Q)\})$\label{ln:tmp}\;
    \ForEach{$((s_\R^1, U_\Q^1), (s_\R^2, U_\Q^2)) \in \mathit{tmp}, \alpha \in
    \vars$}{
      $H_\R \gets \{s_\R' \in S_\R \mid \transtree{s_\R'}{\alpha}{s_\R^1, s_\R^2} \in \Delta_\R \}$\;
      $U_\Q' \gets \emptyset$\;
      \ForEach{$(s_\Q^1, c^1) \in U_\Q^1, (s_\Q^2, c^2) \in U_\Q^2$}{
        \lIf{$c^1 \neq c^2 \land c^1 \neq ? \land c^2 \neq
        ?$}{\label{ln:connecttrees}
          \Continue
        }
        \ForEach{$\transtree{s_\Q}{\alpha}{s_\Q^1, s_\Q^2} \in \Delta_\Q$}{
          $U_\Q' \gets U_\Q' \cup \{(s_q, c) \mid c = c^2 \textbf{ if } c^1 = ? \textbf{ otherwise } c^1\}$\;
        }
      }
      \ForEach{$s_\R' \in H_\R$ s.t.\ $(s_\R', U_\Q') \notin \processed \cup \worklist$}{
        \label{ln:downcltest}
        $\worklist \gets \worklist \cup \{(s_\R', U_\Q')\}$\label{ln:addnew}\;
      }
    }
  }
  \Return{$\mathit{true}$}\;
\end{algorithm}
}

\newcommand{\algEntailmentsym}[0]{
\begin{figure}[t]
\begin{algorithm}[H]
    \SetKw{Continue}{continue}
    \caption{Checking $\autr \entailedbyuptosc \autq$}
    \label{alg:InclCheck}
    \KwIn{TAs $\autr = (S_\R, \Sigma\cup\vars, \Delta_\R, F_\R, \varphi_\R)$ and
        $\autq = (S_\Q, \Sigma\cup\vars, \Delta_\Q, F_\Q, \varphi_\Q)$}
    \KwOut{$\mathit{true}$ if $\autr \entailedbyuptosc \autq$, $\mathit{false}$ otherwise}

    $\processed \gets \emptyset$\;
    $\worklist \gets \big\{(s_\R, Y, U_\Q) \mid \transleaf{s_\R}{t_\R} \in
    \Delta_\R, Y = \varsof{t_\R}, $\label{ln:workset}\\[-1mm]
    \hspace*{34mm}$U_\Q = \{(u_\Q, \formulaof{t_r \cdot \scale = t_q}) \in S_\Q \times \setform \mid \transleaf{u_\Q}{t_\Q} \in \Delta_\Q\}\big\}$\;
    \While{$\worklist \neq \emptyset$}{
      $(s_\R, Y, U_\Q) \gets \mathit{Worklist}.\mathit{pop}()$\;
        \lIf{$s_\R \in F_\R \land 
        \issatof{\formulaof{\neg\bigvee_{\psi \in \Psi}
        \exists x_1, \ldots, x_n \in \complex\, \exists \scale \in \reals \colon
        \varphi_\R \Rightarrow (\varphi_\Q \land \psi)
        }}$ \label{ln:roottest}
        for $\Psi = \{\psi \mid (u_q, \psi) \in U_q, u_q \in F_\Q\}$ and $\{x_1,
        \ldots, x_n\} = \vars \setminus Y$}{
          \Return{$\mathit{false}$}
        }
        $\processed \gets \processed \cup \{(s_\R, Y, U_\Q)\}$\;\label{ln:processed}

        $\mathit{tmp} \gets (\{(s_\R, Y, U_\Q)\} \times \processed) \cup (\processed
        \times \{(s_\R, Y, U_\Q)\})$\label{ln:tmp}\;
        \ForEach{$((s_\R^1, Y^1, U_\Q^1), (s_\R^2, Y^2, U_\Q^2)) \in \mathit{tmp}, \alpha \in
        \Sigma$}{
        $H_\R \gets \{s_\R' \in S_\R \mid \transtree{s_\R'}{\alpha}{s_\R^1,
        s_\R^2} \in \Delta_\R \}$; $Y' = Y^1 \cup Y^2$\label{ln:variables}\;
        $U_\Q' \gets \emptyset$\;
        \ForEach{$(s_\Q^1, \psi^1) \in U_\Q^1, (s_\Q^2, \psi^2) \in U_\Q^2$}{
          \lForEach{$\transtree{s_\Q}{\alpha}{s_\Q^1, s_\Q^2} \in \Delta_\Q$}{
            $U_\Q' \gets U_\Q' \cup \{(s_q, \formulaof{\psi^1 \land \psi^2})\}$
            \label{ln:conjoin}
          }
        }
        \ForEach{$s_\R' \in H_\R$}{
          \label{ln:downcltest}
          \lIf{$\exists(s_\R', Y', V) \in \processed \cup \worklist\colon$\\
          \hspace*{2.3mm} $\forall (u_q, \psi) \in U_\Q'\colon \exists
          (u_q, \chi) \in V\colon \psi \Rightarrow \chi$}{
            \Continue
          }
          $\worklist \gets \worklist \cup \{(s_\R', Y', U_\Q')\}$\label{ln:addnew}\;
        }
      }
    }
    \Return{$\mathit{true}$}\;
\end{algorithm}
\vspace*{-4mm}
\end{figure}
}

\vspace{-3.0mm}
\section{Testing Entailment up to Scaling}\label{sec:inclusion}
\vspace{-2.0mm}

In our approach, we use a~special entailment test at some points, which we
call \emph{entailment up to scaling}, denoted as $\aut \entailedbyuptosc
\autb$.
The reason for a~special entailment relation is that---as mentioned before---at
measurements, we do not perform normalization.
Intuitively, given two \lstas
$\aut=\tuple{Q_\aut,\Sigma,\Delta_\aut,\rootstates_\aut,\globconstr_\aut}$ and
$\autb=\tuple{Q_\autb,\Sigma,\Delta_\autb,\rootstates_\autb,\globconstr_\autb}$,
the relation $\aut \entailedbyuptosc \autb$ holds if and only if for every
tree~$T_\aut$ in the language of~$\aut$ and assignment to the variables occurring in~$T_\aut$, we can find
a~linearly scaled copy of~$T_\aut$ in the semantics of $\autb$ (such that the
values of variables occurring in both~$T_\aut$ and $T_\autb$ match).
Formally,
%
\begin{align*}
  \aut \entailedbyuptosc \autb \iff {} &
(\forall T_\aut \in \langof \aut)
(\forall \sigma_\aut\colon \varsof{T_\aut} \to \complex)\colon\\ &
(\exists T_\autb \in \langof \autb)
(\exists \sigma_\autb\colon (\varsof{T_\autb}\setminus \varsof{T_\aut}) \to \complex)\colon\\ &
(\exists r \in \reals \setminus \{0\})\colon
  T_\aut[\sigma_\aut] =  r \cdot T_\autb[\sigma_\aut \cup \sigma_\autb],
\end{align*}
%
where $r \cdot T$
denotes the tree with the same structure as~$T$ with all numbers in
leaves multiplied by~$r$
and $\varsof{\gamma}$ for any mathematical object~$\gamma$ (a term, a tree with
terms in leaves, a~set of terms, etc.) denotes the set of free variables
occurring in the object.
The $\entailedbyuptosc$ relation is central to our approach.


Enumerating all trees of~$\aut$ and looking for their scaled copies in~$\autb$
would be too inefficient and even impossible in the case of \lsta{}s with
infinite languages (such as those modelling invariants of parameterized quantum
programs~\cite{popl}).
Therefore, we modified the algorithm for \lsta language inclusion test presented in~\cite{popl}.
We note that language inclusion testing for \lstas is more involved than for
standard TAs (cf.\
\cite{BouajjaniHHTV08,AbdullaCHMV10,HolikLSV11,lengal2012vata}).
In the modification, we allow to relate the leaf values with a linear factor
for scaling (in contrast to only by identity as done in the original
inclusion testing algorithm), so that it tests the entailment $\aut \entailedbyuptosc \autb$.


The algorithm makes use of the following essential property of trees generated by an \lsta~$\aut$:
if two nodes at the same level of a~tree~$T$ are labelled by the same state in
an accepting run of~$\aut$ on~$T$, then the subtrees rooted in these nodes are
identical (this follows from the semantics of \lsta{}s and the restriction on
transitions, cf.\ \cref{sec:lsta}).

Intuitively, the algorithm works as follows.
It starts in the root states of~$\aut$ and~$\autb$ and performs a~downward
traversal through the \lsta{}s, level by level, remembering, at each level, how
states from~$\aut$ can map to the states in~$\autb$.
Moreover, the algorithm also remembers how the terms in the leaves of the tree
from~$\aut$ map to the terms in the leaves of the tree from~$\autb$.
The downward successors of each level are computed from transitions leaving
states at the level that need to be synchronized on their choice.
The algorithm explores the space of all of the reachable mappings until it
reaches a~point such that the tree from~$\aut$ has all branches terminated.
At this moment, we check that the terms from the leaf transitions of~$\aut$ can
be mapped (up to scaling) to the corresponding terms from the leaf transitions
in~$\autb$, and if not, we can conclude that the entailment does not hold.

\inclusionExample

Formally, the algorithm performs a~search in the directed graph $(V,E)$ (which
is constructed on the fly), where vertices~$V$ are of the form
$V=(D,\{(f_1, g_1),\ldots,(f_m, g_m)\})$
where $D\subseteq Q_\aut$ is called the
\emph{domain}, each $f_i\colon D \to 2^{Q_\autb}$ is a~map that assigns
every state of~$\aut$ from the domain~$D$ a~set of states of~$\autb$,
and each $g_i\colon \symterm \to 2^{\symterm}$ is a~(partial) mapping
from terms to sets of terms.
Intuitively, $D$~represents the set of states of~$\aut$ at one level of~$\aut$'s
run~$\run$, and every~$f_i$ represents the same level of some possible
run~$\rho_i$ of~$\autb$ on the same tree and the way it can match the
run~$\rho$ of~$\aut$.
For instance, in~\cref{fig:inclusionExample}, the state $q$ of $\aut$
corresponds to the states $r,s$ of $\autb$ because they are used in the same
tree level and the same tree nodes,
so we have $f_i(q) = \{r,s\}$.
Due to the property that all occurrences of a state at the same level in a~run
generate the same subtree mentioned above, we only need to maintain encountered
states and their alignment with each another.
The term mappings~$g_i$ are used to remember how the terms from the
leaves of~$\aut$ are mapped to terms in the leaves of~$\autb$.
For instance, if some term~$t$ from a~leaf of~$\aut$ is mapped by two
different terms~$t_1$ and $t_2$ of~$\autb$, we will later need to check whether
there is a~scaling factor~$r$ such that $t = r\cdot t_1$ and, at the same type,
$t = r \cdot t_2$ (and the global constraints of~$\aut$ and~$\autb$ are
satisfied).
We give a~general algorithm here; for \lstas that accept only perfect trees (as
is the case for the ones encoding quantum states), all branches of the accepted
trees terminate at the same time so there is no need to remember the term
mappings~$g$ across different levels and the scaling compatibility could be
checked only locally.



The graph search begins from the \emph{source} vertices, one for each state $q
\in \rootstates_\aut$, of the form $(\{q\},\{(\{q\mapsto
\{r_1\}\}, \emptyset),\ldots,(\{q\mapsto \{r_k\}\}, \emptyset)\})$, where $\{r_1,\ldots, r_k \} =
\rootstates_\autb$, corresponding to the root states of both $\aut$ and
$\autb$ (the $\emptyset$'s denote empty term mappings).
If the search finds a \emph{terminal} vertex $(\emptyset, F)$, where $(\emptyset, g)
\notin F$, meaning that an accepting run of $\aut$ has been found, but there
is no corresponding matching run of $\autb$ (for any~$g$), we can conclude that the
entailment test failed (it represents the case when~$\aut$ finished reading all
branches of the tree but~$\autb$ did not).
On the other hand, if there is $(\emptyset, g) \in F$,
we still need to check that the set of terms~$g$ is compatible.
The graph's edges represent generating the next level of runs
for both $\aut$ and $\autb$ and how the respective states align with each
other.
The specific construction of the edges from a vertex $v = (D, \{(f_1, g_1), \ldots,
(f_m,g_m)\})$, where $D \neq \emptyset$, to each lower-level vertex $v' = (D',
\{(f'_1, g'_1), \ldots, (f'_n, g'n)\})$ follows.



First, we compute possible successors of the~$D$-component of~$v$.
To do this, we need to explore all \emph{feasible} sets $\Gamma_\aut$ of
transitions from~$D$ in~$\aut$.
More concretely, in each set~$\Gamma_\aut$, we select exactly one downward
transition $\delta_{q_\aut}$ originating from each $q_\aut \in D$, such that
all transitions in $\Gamma_\aut$ share a~common choice (as required by the
definition of an accepting run (cf. \cref{sec:lsta}).
Formally, given $D = \{q_1, \ldots, q_k)$, we consider all sets of
transitions~$\Gamma_\aut = \{\delta_1, \ldots, \delta_k\}$ such that the following formula holds:
\begin{equation}
  \left(\forall 1\le i\le k\colon \delta_i\in\Delta \land \topof{\delta_i}=q_i\right)\quad \land \quad \textstyle\bigcap\{\ell(\delta_i) \mid 1\le i\le k\}\ne\emptyset.
\end{equation}
We will denote the set of all such $\Gamma_\aut$'s from a~set of states~$D$ as $\feasibletransof \aut D$.

\newcommand{\algStateCorr}[0]{
\begin{algorithm}[t]
\KwIn{Set of transitions $\Gamma_\aut\subseteq \Delta_\aut$, map $f\colon\{\topof{\delta} \mid \delta\in\Gamma_\aut\} \to 2^{Q_\autb}$}
\KwOut{The set of pairs $(f', g')$ of mappings compatible with~$\Gamma_\aut$ obtained from~$f$}
\SetArgSty{normalfont}
\SetKw{KwBreak}{break}
\SetKwProg{Func}{Procedure}{:}{}
\SetKwIF{IfNot}{ElseIfNot}{}{if not}{then}{else if not}{}{}
\SetKwFunction{FindAllFeasibleTransitions}{FindAllFeasibleTransitions}
\SetKwFunction{FindAllMappings}{FindAllMappings}
$Q \gets \bigcup \imgof f$;\label{ln:imgoff}
$F' \gets \emptyset$\;
\ForEach{$\Gamma_\autb\in\ \feasibletransof \autb Q$}{\label{ln:feasibleb}
    $f' \gets \emptyset$;
    $g' \gets \emptyset$;
    $\mathit{failed} \gets \FF$\;
    \ForEach{$\delta_\aut\in\Gamma_\aut$ and $q \in f(\topof{\delta_\aut})$}{\label{ln:forgammastart}
      $\{\delta_\autb\} \gets \{\delta_\autb \in \Gamma_\autb \mid q = \topof{\delta_\autb}\}$\label{ln:find:deltaB}\;
      \lIf{$a = \symof{\delta_\aut}$ and $b=\symof{\delta_\autb}$ are both leaf symbols}{\label{ln:symbolcheckstart}\label{ln:symbolsleaf}
      $g'(a).\text{insert}(b)$
      }
      \ElseIf{$\symof{\delta_\aut} = \symof{\delta_\autb}$ is an internal symbol}{\label{ln:symbolsinternal}
          $f'(\leftof{\delta_\aut})$.insert($\leftof{\delta_\autb}$);
          $f'(\rightof{\delta_\aut})$.insert($\rightof{\delta_\autb}$)\;
      }\label{ln:symbolcheckend}
      \lElse{\label{ln:symbolsother}
        $\mathit{failed} \gets \TT$; \KwBreak
      }
      \label{ln:forgammaend}
  }
  \lIf{$\neg \mathit{failed}$}{
    $F'.\text{insert}((f', g'))$
  }
}
\KwRet $F'$\;\label{ln:mappingreturn}
\caption{FindAllMappings($\Gamma_\aut, f$).}
\label{algo:statecorr}
\end{algorithm}
}

\newcommand{\algInclusion}[0]{
\begin{algorithm}[t]
\KwIn{\lstas $\aut = \tuple{Q_\aut, \mathbb N\cup \symterm, \Delta_\aut, \rootstates_\aut, \globconstr_\aut}$, $\autb=\tuple{Q_\autb, \mathbb N\cup \symterm, \Delta_\autb, \rootstates_\autb = \{r_1,\ldots,r_k\}, \globconstr_\autb}$.}
\KwOut{$\TT$ if $\aut \entailedbyuptosc \autb$, $\FF$ otherwise}
\SetArgSty{normalfont}
\SetKw{KwBreak}{break}
\SetKwProg{Func}{Procedure}{:}{}
\SetKwIF{IfNot}{ElseIfNot}{}{if not}{then}{else if not}{}{}

\SetKwFunction{FindAllFeasibleTransitions}{FindAllFeasibleTransitions}
\SetKwFunction{FindAllMappings}{FindAllMappings}
$\mathit{processed} \gets \emptyset$\;
$\mathit{workset} \gets \{(\{q\},\ \{(\{q\mapsto \{r_1\}\}, \emptyset), \ldots, (\{q\mapsto \{r_k\}\}, \emptyset)\}) \mid q\in\rootstates_\aut\}$\;\label{ln:sourcevertices} 
\While{$\exists (D,F) \in \mathit{workset}$}{\label{ln:entail:nonempty}
    $\mathit{workset}.\text{remove}((D,F))$\;\label{ln:entail:remove}
    $\mathit{processed}.\text{insert}((D,F))$\;
    \If{$D=\emptyset$}{\label{ln:entail:terminalstart}
      $\varsY \gets \{\varsof{u_1} \mid (u_1 \mapsto U_2) \in g \land (\emptyset, g) \in F\}\cup\varsof{\globconstr_\aut}$\label{ln:entail:checkstart}\;
      $\varsZ \gets \{\varsof{U_2} \mid (u_1 \mapsto U_2) \in g \land (\emptyset, g) \in F\}\cup\varsof{\globconstr_\autb} \setminus \varsY$\;
      \If{$\displaystyle\neg\forall\varsY \colon \globconstr_{\aut} \implies
         \exists\varsZ\colon \globconstr_{\autb} \land \bigvee_{\mathclap{(\emptyset, g)
         \in F}} \exists r\in\reals\setminus\{0\}\colon
        \bigwedge_{\mathclap{(u_1 \mapsto U_2) \in g\atop u_2 \in U_2}} u_1 = r \cdot u_2$}{\label{ln:entail:check}
        \vspace{-5mm}
        \Return{$\FF$}\;\label{ln:entail:terminalend}\label{ln:entail:checkend}
      }
    }
    \ForEach{$\Gamma_\aut\in\feasibletransof \aut D$}{\label{ln:entail:foreach}
        $D' \gets \big\{q\in\botof{\delta} \mid \delta\in\Gamma_\aut\big\}$; $F' \gets \emptyset$\;\label{ln:entail:d}
        \ForEach{$(f,g) \in F$}{\label{ln:entail:F}
            $U\gets {}$\FindAllMappings{$\Gamma_\aut, f$}\;
            $F' \gets F' \cup \{(f', g \merge g') \mid (f', g') \in U\}$\label{ln:entail:merge}\;
        }
        \lIf{$(D', F')\not\in \mathit{processed} \cup \mathit{workset}$}{
          $\mathit{workset}.\text{insert}((D', F'))$\label{ln:entail:insert}
        }
    }
}
\Return \TT\;
\caption{Checking if $\aut \entailedbyuptosc \autb$.}
\label{algo:inclusion}
\end{algorithm}
}

\algStateCorr

Next, we will show how to construct the set of all feasible pairs of mappings
$\{(f'_1, g'_1),
\ldots, (f'_n, g'_n)\}$ for some $\Gamma_\aut$ and a~pair of upper level mappings~$(f,g)$.
Each pair of mappings $(f'_j, g'_j)$ at the lower level is derived from some pair of mappings~$(f_i,g_i)$ at
the upper level and a set of downward transitions $\Gamma_\autb$ from~$\autb$.
The construction process is described in~\cref{algo:statecorr}, where we use
$(f, g)$ to denote an upper level state and term mapping respectively and
$(f',g')$ to denote their lower level counterparts.

The basic idea of the algorithm is simple:
\begin{inparaenum}[(1)]
  \item  we use the upper level mapping $f$ and transitions
    $\Gamma_\aut$ to compute the set of top states $Q$ (\lnref{ln:imgoff}), 
  \item  then, we find all feasible transitions $\Gamma_\autb$ from $Q$
    (\lnref{ln:feasibleb}), and, finally,
  \item  for each pair $(\Gamma_\aut, \Gamma_\autb)$, we construct one pair of lower
    level state and term mappings~$(f',g')$ (\lnrangeref{ln:forgammastart}{ln:forgammaend}).
    \label{item:one-lower}
\end{inparaenum}
Specifically, in step (\ref{item:one-lower}),
it must hold that for each transition~$\delta_\aut \in \Gamma_\aut$,
every state $q \in f(\topof{\delta_\aut})$ needs to be able to
match~$\delta_\aut$ by a~transition from~$\Gamma_\autb$.
To check this, for every such~$q$ we select from~$\Gamma_\autb$ the transition
$\delta_\autb$, which is the transition of~$\Gamma_\autb$ with~$q$ as its top
(it follows from $\feasibletrans$ that there is exactly one).
There are three possible cases:
\begin{itemize}
    \item \emph{Both $\delta_{\aut}$ and $\delta_{\autb}$ are leaf transitions
      (\lnref{ln:symbolsleaf}):} we remember in~$g'$ that the symbol of $\delta_\autb$
      need to be able to match the symbol of $\delta_\aut$, which will be
      checked later for all such matchings together.
    \item \emph{The symbols of both $\delta_{\aut}$ and $\delta_{\autb}$ are
      internal (\lnref{ln:symbolsinternal}):} in this case, we add new entries
      to the lower level mapping~$f'$.
    \item \emph{One transition is internal while the other is leaf
      (\lnref{ln:symbolsother}):} the pair $(\Gamma_\aut, \Gamma_\autb)$
      cannot form a feasible lower level mapping.
\end{itemize}
%

\algInclusion

Finally, the main entailment testing routine is summarized in~\cref{algo:inclusion}.
\lnref{ln:sourcevertices} creates the set of source vertices of the explored graph.
\lnrangeref{ln:entail:nonempty}{ln:entail:remove} pick a~vertex $(D,F)$ that has
not been processed yet.
\lnrangeref{ln:entail:terminalstart}{ln:entail:terminalend} check whether
$(D,F)$ is a terminal vertex and conclude that the entailment test fails when
such a~vertex is reached.
This check consists of looking at all pairs $(\emptyset, g)$ in~$F$ and checking
whether there is a~way how the $\autb$-terms from the~$g$'s can together cover
(modulo a~scaling factor) the behaviour of the $\aut$-terms.
\lnrangeref{ln:entail:foreach}{ln:entail:insert} are the edge construction
procedure.
\lnrangeref{ln:entail:foreach}{ln:entail:d} enumerate all feasible
$\Gamma_\aut$ and use them to create the next vertex $(D',F')$.
Specifically, $D'$ are the bottom states from $\Gamma_\aut$
(\lnref{ln:entail:d}), and $F'$ are the union of all feasible mappings of
$\Gamma_\aut$ (\lnrangeref{ln:entail:F}{ln:entail:insert}). 
The successor term mapping is computed from the upper-level one and from the
one returned by \FindAllMappings by, for each term of~$\aut$, merging the
corresponding sets of terms of the two mappings: $g \merge g' = \{x \mapsto Y
\mid x \in \domof{g \cup g'}, Y = g(x) \cup g'(x)\}$ (\lnref{ln:entail:merge}).

A~crucial part of the algorithm is the term mapping~$g$ check between
the terms from~$\aut$ and the terms from~$\autb$
(\lnrangeref{ln:entail:checkstart}{ln:entail:checkend}).
Here, we want to check that for every possible value that we can obtain from
a~term~$t$ on the left-hand side of the entailment (satisfying~$\aut$'s global
constraint~$\globconstr_\aut$) and every term~$t_i$ from $g(t) = \{t_1, \ldots,
t_k\}$, there is a way to obtain a~value (satisfying $\autb$'s global
constraint~$\globconstr_\autb$) such that for all these values, there is
a~common scaling factor~$r$ to make them equal to the value from~$t$.
We emphasize the way how we need to deal with the quantified variables.
Variables $\varsY$ occurring on the left-hand side (and possibly also on the
right-hand side, since we may need to synchronize the values) are quantified
universally, while variable~$\varsZ$ occuring only on the right-hand side of the
entailment are quantified existentially.
This scaling check requires invoking an SMT solver with a formula in the \texttt{NIRA}
(non-linear integer and real arithmetic) logic.
In the special case where all leaf symbols are constants, $r$~is the only
variable, and global constraints are $\TT$, the problem reduces to a simple \texttt{QF\_LRA} (quantifier-free
linear real arithmetic) formula.
\begin{theorem}[Soundness]
    When \cref{algo:inclusion} terminates, it returns true iff $\aut \entailedbyuptosc \autb$.
\end{theorem}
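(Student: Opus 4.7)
The plan is to prove both directions via a structural loop invariant that characterizes the meaning of each vertex $(D,F)$ popped from the worklist. The invariant I would maintain is: $(D,F)$ is reachable from a source vertex iff there exists a common tree prefix $T'$ (of some fixed depth $d$) and a partial run $\rho_\aut$ of $\aut$ on $T'$ such that (i) $D$ is exactly the set of states appearing at depth $d$ in $\rho_\aut$, and (ii) $F$ enumerates precisely all pairs $(f,g)$, where $f$ records, for each $q \in D$, the set of states of $\autb$ that can label the corresponding nodes in some partial run $\rho_\autb$ of $\autb$ on $T'$ synchronized with $\rho_\aut$, and $g$ is the accumulated leaf-term mapping collected along $\rho_\autb$. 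Establishing this invariant reduces to checking that (a) \texttt{FindAllFeasibleTransitions} enumerates exactly those transition sets $\Gamma_\aut$ with a common choice (so extending $\rho_\aut$ by one level is faithful to the definition of accepting run), (b) \texttt{FindAllMappings} enumerates all compatible extensions to $\autb$ using the fact from \cref{sec:lsta} that two occurrences of the same state at the same level generate identical subtrees, and (c) the set-union merge $g \merge g'$ correctly accumulates the term correspondences.

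For the forward direction (algorithm returns $\TT \Rightarrow \aut \entailedbyuptosc \autb$), fix any $T_\aut \in \langof{\aut}$ with accepting run $\rho_\aut$ and any $\sigma_\aut\colon \varsof{T_\aut} \to \complex$ satisfying $\globconstr_\aut$. Since $T_\aut$ is a perfect tree, its run terminates simultaneously on all branches, so following the trace of $\rho_\aut$ through the graph leads, by the invariant, to a terminal vertex $(\emptyset,F)$ where $F$ enumerates every synchronized run of $\autb$ on $T_\aut$. The algorithm did not return $\FF$ at this vertex, so the SMT check at \lnref{ln:entail:check} was satisfied; instantiating the universal quantifier with $\sigma_\aut$ yields some $(\emptyset,g) \in F$, an assignment $\sigma_\autb$ of the remaining variables, and a scale $r \neq 0$ with $u_1 = r \cdot u_2$ for every term pair in $g$. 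This selection of $g$ picks out a specific $\rho_\autb$ and thus a tree $T_\autb \in \langof{\autb}$ with $T_\aut[\sigma_\aut] = r \cdot T_\autb[\sigma_\aut \cup \sigma_\autb]$. The reverse direction is essentially the contrapositive: if the algorithm returns $\FF$ at $(\emptyset,F)$, the negated SMT query produces a witnessing $\sigma_\aut$ for which no matching is possible; the invariant provides a concrete $T_\aut$ with $\rho_\aut$ reaching this vertex, and completeness of $F$ guarantees that no scaled $T_\autb \in \langof{\autb}$ can match, refuting the entailment. Termination piggybacks on finiteness of $2^{Q_\aut} \times 2^{\text{pairs of mappings}}$ together with the $\mathit{processed} \cup \mathit{workset}$ membership check on the insertion line.

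The main obstacle is proving \emph{completeness} of the $F$-component, i.e., that the enumeration in \texttt{FindAllMappings} combined with the merging step produces all pairs $(f,g)$ that correspond to genuine matching runs of $\autb$ and nothing spurious. The delicate point is that the term mapping $g$ accumulates commitments across many levels but is only validated globally at terminal vertices; one must show that no compatible scaling is ever ruled out prematurely by the per-step set-union merge, and conversely that every combination actually arises from some joint run. This in turn depends crucially on the LSTA property that states uniquely determine subtrees at each level, which prevents an exponential blow-up in tracked mappings. A secondary subtlety lies in the quantifier structure of the SMT check: variables in $\varsY$ (occurring in $\aut$'s terms or constraints) are universally quantified while $\varsZ$ (only on $\autb$'s side) are existentially quantified under the scaling existential $\exists r$, which must be aligned carefully with the definition of $\entailedbyuptosc$ where $\sigma_\aut$ is chosen first and $\sigma_\autb, r$ are chosen afterwards.
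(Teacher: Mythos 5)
First, a point of reference: the paper itself contains no proof of this soundness theorem in the provided source --- only the Termination theorem is followed by a proof --- so there is no ``paper proof'' to compare yours against, and your attempt has to stand on its own. Its architecture is the natural and, I believe, correct one: an invariant tying each graph vertex $(D,F)$ to the level-$d$ cross-sections of partial runs of $\aut$ and of the synchronized partial runs of $\autb$; a forward direction that instantiates the passed check at \lnref{ln:entail:check} with $\sigma_\aut$ to extract a disjunct $g$, an assignment $\sigma_\autb$, and a scale $r$; and a reverse direction that turns a failed check into a concrete counterexample tree. You also correctly isolate the facts that make the invariant workable: that same-level occurrences of a state in an accepting run must use the \emph{same} transition (disjointness of choices plus the common-choice requirement), which is what lets $D$ be a mere set of states and keeps $\feasibletransof{\aut}{D}$ faithful to accepting runs, and the alignment of the $\forall\varsY\,\exists\varsZ\,\exists r$ prefix with the quantifier order in the definition of $\entailedbyuptosc$.

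The gap is the one you name yourself: the soundness and completeness of the $F$-component is labelled ``the main obstacle'' and then left undischarged, so what you have is a proof plan rather than a proof. Concretely, you still owe an induction on the level showing (i) that every $(f,g)\in F$ at a terminal vertex arises from an actual chain of feasible $\Gamma_\autb$'s assembling into an accepting run of $\autb$ on a tree with the same internal structure as $T_\aut$ (in particular that leaf transitions of $\aut$ are always matched by leaf transitions of $\autb$, so the $\autb$-run closes exactly when the $\aut$-run does), and (ii) conversely that every accepting run of $\autb$ on a structurally matching tree is represented by some $(f,g)\in F$, so that a failed check really refutes \emph{all} candidate trees $T_\autb$ and not merely those the algorithm happened to track; point (ii) is where the identical-subtree property must be invoked for $\autb$'s runs as well as $\aut$'s, since $f(q)$ is a set of $\autb$-states all of which must generate the same subtree. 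A smaller imprecision: your invariant is phrased as ``$(D,F)$ is reachable iff there exists a tree prefix $T'$,'' but one vertex can be reached along several paths corresponding to different prefixes; it should be stated per path (for every path from a source to $(D,F)$ there is a prefix $T'$ for which $D$ and $F$ have the stated meaning), otherwise the reverse direction does not obviously hand you a single witness tree $T_\aut$ for the failing assignment produced by the negated query.
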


\begin{theorem}[Termination]
  When the terms in leaf symbols and the global constraints of $\aut$
  and~$\autb$ use a~decidable theory, the algorithm always terminates.
\end{theorem}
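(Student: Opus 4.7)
The plan is to argue termination along the standard two axes for worklist-based procedures: (i)~each single iteration of the main while-loop terminates, and (ii)~only finitely many distinct worklist items can ever be generated, so the loop condition eventually fails. The $\mathit{processed}$ check guarantees that no vertex is inserted twice, so (ii)~is the essential finiteness argument.

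First, for (i), each iteration performs a bounded number of finitary operations (enumerating $\feasibletransof{\aut}{D}$, which is finite because $\Delta_\aut$ is finite, and similarly for the inner calls) plus one SMT query on \lnref{ln:entail:check}. The SMT query is a formula over the theory used by the leaf terms and the global constraints, which is decidable by assumption, so the query terminates. The call to \FindAllMappings enumerates $\feasibletransof{\autb}{Q}$ and a finite number of sub-cases, so it too is finitary.

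For (ii), I would show that the set of possible vertices $(D,F)$ is finite by bounding each component. The component $D$ ranges over $2^{Q_\aut}$, which is finite since $Q_\aut$ is finite. The component $F$ is a set of pairs $(f,g)$ where $f\colon D\to 2^{Q_\autb}$; there are only finitely many such~$f$ because $Q_\autb$ is finite. The delicate part is~$g$: although $g$ is typed as a partial map $\symterm\to 2^{\symterm}$ and $\symterm$ is in principle infinite, the domain of~$g$ can only ever contain leaf symbols of~$\aut$ actually appearing in its transitions, and the codomain can only ever contain leaf symbols of~$\autb$ that appear in its transitions (this is ensured by the fact that entries of~$g$ are created only in \cref{algo:statecorr} from concrete leaf transitions, and the $\merge$ operation in \lnref{ln:entail:merge} preserves this). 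Since $\Delta_\aut$ and $\Delta_\autb$ are finite, both of these sets of terms are finite, so there are only finitely many partial maps~$g$, hence only finitely many pairs $(f,g)$, hence only finitely many subsets~$F$, and finally only finitely many vertices $(D,F)$.

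Combining the two, because the $\mathit{processed}$ set monotonically grows within the finite universe of vertices and every worklist removal adds its vertex to $\mathit{processed}$, the worklist must eventually become empty, at which point the algorithm returns $\TT$. The main obstacle I anticipate is the careful bookkeeping for~$g$: arguing cleanly that the sets of terms appearing in domain and codomain are restricted to the (finitely many) leaf symbols drawn from the two input automata, despite $\symterm$ being an a~priori unbounded set. Once that invariant is stated and maintained through \cref{algo:statecorr} and the $\merge$ step, the termination argument reduces to standard worklist reasoning combined with the decidability hypothesis for the SMT call.
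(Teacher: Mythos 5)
Your proposal is correct and follows essentially the same route as the paper's proof: the paper argues that the graph is finite because the states and terms occurring in $\aut$ and $\autb$ are finite (which you spell out in more detail by bounding the $D$, $f$, and $g$ components of a vertex), and that the check at \lnref{ln:entail:check} terminates by the decidability assumption. Your additional care about the domain and codomain of~$g$ being restricted to leaf symbols actually occurring in the transitions is a useful elaboration of the paper's terser "number of terms occurring in $\aut$ and $\autb$ is finite," but it is the same argument.
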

\begin{proof}
Since the number of states and terms occurring in~$\aut$ and~$\autb$ is finite, 
the constructed graph is also finite.
Further, since the underlying theory for the terms and global constraints is assumed to be
decidable, the check at \lnref{ln:entail:check} always terminates.
\qed
\end{proof}
\newcommand{\EXPSPACE}[0]{\textsf{\textbf{{EXPSPACE}}}}


\vspace{-3.0mm}
\section{Experimental Results}\label{sec:experiments}
\vspace{-2.0mm}

\newcommand{
\begin{table}[t] \centering
    \caption{Results of verifying some real-world examples with \autoqq}
\vspace{0.0cm}
\label{tab:exp}
\scalebox{1}{\scalebox{0.82}{
\begin{tabular}{lrrccrrclrrccrr}\toprule
\multicolumn{7}{c}{\emph{Weakly Measured Grover's Search}~\cite{weakly:chris}} &&
\multicolumn{7}{c}{\emph{Repeat-Until-Success}~\cite{RUS}}\\
\cmidrule{1-7}
\cmidrule{9-15}
\multicolumn{1}{c}{\textbf{program}} &
\multicolumn{1}{c}{\textbf{qubits}} &
\multicolumn{1}{c}{\textbf{gates}} &
~~ &
\multicolumn{1}{c}{\textbf{result}} &
\textbf{time} &
\multicolumn{1}{c}{\textbf{memory}} &
~~ &
\multicolumn{1}{c}{\textbf{program}} &
\multicolumn{1}{c}{\textbf{qubits}} &
\multicolumn{1}{c}{\textbf{gates}} &
~~ &
\multicolumn{1}{c}{\textbf{result}} &
\textbf{time} &
\multicolumn{1}{c}{\textbf{memory}}\\
\cmidrule{1-7}
\cmidrule{9-15}
WMGrover (03) & 7 & 50 && OK & 0.0s & 42MB && $(2X+\sqrt2 Y+Z)/\sqrt7$ & 2 & 29 && OK & 0.0s & 7MB\\
WMGrover (10) & 21 & 169 && OK & 0.2s & 42MB && $(I+i\sqrt2X)/\sqrt3$ & 2 & 17 && OK & 0.0s & 7MB\\
WMGrover (20) & 41 & 339 && OK & 0.8s & 42MB && $(I+2iZ)/\sqrt5$ & 2 & 27 && OK & 0.0s & 6MB\\
WMGrover (30) & 61 & 509 && OK & 2.3s & 43MB && $(3I+2iZ)/\sqrt13$ & 2 & 43 && OK & 0.0s & 7MB\\
WMGrover (40) & 81 & 679 && OK & 5.4s & 43MB && $(4I+iZ)/\sqrt17$ & 2 & 77 && OK & 0.0s & 6MB\\
WMGrover (50) & 101 & 849 && OK & 11s & 44MB && $(5I+2iZ)/\sqrt29$ & 2 & 69 && OK & 0.0s & 7MB\\
\bottomrule
\end{tabular}
}}
\vspace*{-7mm}
\end{table}
}[0]{
\begin{table}[t] \centering
    \caption{Results of verifying some real-world examples with \autoqq}
\vspace{0.0cm}
\label{tab:exp}
\scalebox{1}{\scalebox{0.82}{
\begin{tabular}{lrrccrrclrrccrr}\toprule
\multicolumn{7}{c}{\emph{Weakly Measured Grover's Search}~\cite{weakly:chris}} &&
\multicolumn{7}{c}{\emph{Repeat-Until-Success}~\cite{RUS}}\\
\cmidrule{1-7}
\cmidrule{9-15}
\multicolumn{1}{c}{\textbf{program}} &
\multicolumn{1}{c}{\textbf{qubits}} &
\multicolumn{1}{c}{\textbf{gates}} &
~~ &
\multicolumn{1}{c}{\textbf{result}} &
\textbf{time} &
\multicolumn{1}{c}{\textbf{memory}} &
~~ &
\multicolumn{1}{c}{\textbf{program}} &
\multicolumn{1}{c}{\textbf{qubits}} &
\multicolumn{1}{c}{\textbf{gates}} &
~~ &
\multicolumn{1}{c}{\textbf{result}} &
\textbf{time} &
\multicolumn{1}{c}{\textbf{memory}}\\
\cmidrule{1-7}
\cmidrule{9-15}
WMGrover (03) & 7 & 50 && OK & 0.0s & 42MB && $(2X+\sqrt2 Y+Z)/\sqrt7$ & 2 & 29 && OK & 0.0s & 7MB\\
WMGrover (10) & 21 & 169 && OK & 0.2s & 42MB && $(I+i\sqrt2X)/\sqrt3$ & 2 & 17 && OK & 0.0s & 7MB\\
WMGrover (20) & 41 & 339 && OK & 0.8s & 42MB && $(I+2iZ)/\sqrt5$ & 2 & 27 && OK & 0.0s & 6MB\\
WMGrover (30) & 61 & 509 && OK & 2.3s & 43MB && $(3I+2iZ)/\sqrt13$ & 2 & 43 && OK & 0.0s & 7MB\\
WMGrover (40) & 81 & 679 && OK & 5.4s & 43MB && $(4I+iZ)/\sqrt17$ & 2 & 77 && OK & 0.0s & 6MB\\
WMGrover (50) & 101 & 849 && OK & 11s & 44MB && $(5I+2iZ)/\sqrt29$ & 2 & 69 && OK & 0.0s & 7MB\\
\bottomrule
\end{tabular}
}}
\vspace*{-7mm}
\end{table}
}

We demonstrate the use of \autoqq~\cite{autoqrepo} on two real-world use cases
consisting of quantum programs with loops that were proposed
in~\cite{RUS,weakly:chris}.
We ran all experiments on a server running Ubuntu 22.04.3 LTS with an
AMD EPYC 7742 64-core processor (1.5\,GHz), 1,152\,GiB of RAM, and a 1\,TB SSD.

\newcommand{\groveralg}[0]{
\begin{wrapfigure}[11]{r}{6.8cm}
\vspace*{-7mm}
\hspace*{-3mm}
\scalebox{0.9}{
\begin{minipage}{7.7cm}
\begin{algorithm}[H]
    \caption{A Weakly Measured Version of Grover's algorithm (solution $s=0^n$)}
    \label{alg:Grover}

    {\color{gray} Pre: $\{\specpair{0^{n+2}}{1}+ \specdef{0}\}$\;} 
    $H_3;\ H_4;\ \ldots;\ H_{n+2}$\label{ln:walshhadamard}\;
    $\mathcal{O}_{2,\ldots,(n+2)};\mathit{CK}^{2}_{1};\mathcal{O}_{2,\ldots,(n+2)}$\;
    {\color{gray} Inv:
    $\{\specpair{000^{n}}{v_{\mathit{sol1}}}+\specpair{000^{n-1}1}{v_k}+\cdots+{}$\\
    \hspace*{8mm}$\specpair{001^{n}}{v_k}+\specpair{100^{n}}{v_{\mathit{sol2}}}+\specdef{0}\}$\;} %
    \While{$M_1=0$}{
      \{$\mathcal{G}_{2,\ldots,(n+2)};\mathcal{O}_{2,\ldots,(n+2)};\mathit{CK}^{2}_{1};\mathcal{O}_{2,\ldots,(n+2)}$\}\;
    }
    {\color{gray} Post: $\{\specpair{10s}{1} + \specdef{0}\}$\;}
\end{algorithm}
\end{minipage}
}
\end{wrapfigure}
}

\newcommand{\groverfigwrap}[0]{
\begin{wrapfigure}[19]{r}{58mm}
 \vspace{-15mm}
\hspace*{-3mm}
\scalebox{0.782}{
\begin{tabular}{cc}
\vspace{-3mm}
\subcaptionbox{Uniform below $\ket{00}$\label{fig:grovera}}{\scalebox{.7}{\tA}} &
\subcaptionbox{Applied $\mathcal{O}_{2,\ldots,(n+2)}$\label{fig:groverb}}{\scalebox{.7}{\tB}}\\\\
\vspace{-3mm}
  \subcaptionbox{Applied $\mathit{CK}^2_1$\label{fig:groverc}}{\scalebox{.7}{\tC}}&
\subcaptionbox{Applied $\mathcal{O}_{2,\ldots,(n+2)}$ again\label{fig:groverd}}{\scalebox{.7}{\tD}}\\\\
\subcaptionbox{When $x_1$ is measured $0$\label{fig:grovere}}{\scalebox{.7}{\tE}} &
\subcaptionbox{Applied $\mathcal{G}_{2,\ldots,(n+2)}$\label{fig:groverf}}{\scalebox{.7}{\tF}}
\end{tabular}}
\caption{Intermediate states of~\cref{alg:Grover}}

\label{fig:evolution}
\end{wrapfigure}
}

\newcommand{\groverfig}[0]{
\begin{figure}[t]
\begin{subfigure}[b]{0.33\linewidth}
\scalebox{.7}{\tA}
\vspace{-1mm}
\caption{Uniform below $\ket{00}$}
\label{fig:grovera}
\end{subfigure}
\begin{subfigure}[b]{0.33\linewidth}
\scalebox{.7}{\tB}
\vspace{-1mm}
\caption{Applied $\mathcal{O}_{2,\ldots,(n+2)}$}
\label{fig:groverb}
\end{subfigure}
\begin{subfigure}[b]{0.33\linewidth}
\scalebox{.7}{\tC}
\vspace{-1mm}
\caption{Applied $\mathit{CK}^2_1$}
\label{fig:groverc}
\end{subfigure}

\begin{subfigure}[b]{0.33\linewidth}
\scalebox{.7}{\tD}
\vspace{-1mm}
\caption{Applied $\mathcal{O}_{2,\ldots,(n+2)}$ again}
\label{fig:groverd}
\end{subfigure}
\begin{subfigure}[b]{0.33\linewidth}
\scalebox{.7}{\tE}
\vspace{-1mm}
\caption{When $x_1$ is measured $0$}
\label{fig:grovere}
\end{subfigure}
\begin{subfigure}[b]{0.33\linewidth}
\scalebox{.7}{\tF}
\vspace{-1mm}
\caption{Applied $\mathcal{G}_{2,\ldots,(n+2)}$}
\label{fig:groverf}
\end{subfigure}
\vspace{-5mm}
\caption{Intermediate states of~\cref{alg:Grover}}
\label{fig:evolution}
\vspace*{-5mm}
\end{figure}
}

\pagebreak 

\vspace{-3.0mm}
\subsection{The Weakly Measured Version of Grover's Algorithm}
\vspace{-1.0mm}
\groveralg
\emph{Grover's algorithm}~\cite{Grover96}, introduced in 1996, is a quantum algorithm
that performs unstructured search.
Given an \emph{oracle function} (which can say whether a~particular
%
%
binary
assignment is a~solution), Grover's algorithm can efficiently find a solution
(with high probability).
The algorithm requires approximately $\mathcal{O}(\sqrt{N/k})$ evaluations of
the oracle function, where $N$ is the size of the function's domain
(usually~$2^n$ for~$n$ qubits), and $k$ is the number of solutions.
The number of solutions is, however, not always known, making it difficult to
determine the algorithm's parameters (the algorithm is sensitive to the number
of evaluations; in particular, doing more evaluations may make the
probability of finding the solution smaller).
To address this issue, a~variation of Grover's search, called the \emph{weakly
measured} version (cf.~\cref{alg:Grover}), was recently proposed~\cite{weakly:chris}.
The weakly measured version eliminates the need for knowing the number of solutions, making
the algorithm more applicable.


\groverfig

To explain the algorithm, we first introduce some of its key components.
The algorithm works over qubits $x_1, \ldots, x_{n+2}$.
\lnref{ln:walshhadamard} first applies the \emph{Walsh–Hadamard} transform to
obtain the superposition on all qubits other than~$x_1$ and~$x_2$ (which are two
ancillas), obtaining the state in \cref{fig:grovera}.
The \emph{oracle} circuit, denoted as $\mathcal{O}_{2, \ldots, (n+2)}$, works from
qubits $x_2$ to $x_{n+2}$, where $x_2$ is the ancilla qubit and $x_3$ to
$x_{n+2}$ are the working qubits.
As shown from~\cref{fig:grovera,fig:groverb} (and also
from~\cref{fig:groverc,fig:groverd}), the effect of the oracle circuit
is to flip the ancilla qubit of the computational bases corresponding to the
solutions.
That is, it swaps the amplitude values of $\ket{0s}$ and $\ket{1s}$, for all
solutions~$s$.
The oracle circuit can be constructed using gates supported in $\autoqq$. 

The controlled rotation gate ${CK}^i_j$ is a~special gate supported
in \autoqq.
In this algorithm, the gate always applies to a target qubit whose value is
$\ket{0}$ when the controlled qubit is $\ket{1}$, and it updates the target
qubit to $a\ket{1}+b\ket{0}$ with $a^2+b^2=1$, for some very small $b$.
In \autoqq, we use $a=\frac{21}{221}$ and $b=\frac{220}{221}$.
We demonstrate the behavior of ${CK}^2_1$ from \cref{fig:groverb,fig:groverc},
where a~small portion ($\frac{21^2}{221^2} \approx 1\,\%$) of the
probability under the branch~$\ket{01}$ is moved to the branch~$\ket{11}$, as shown
in~\cref{fig:groverc}.
After applying $\mathcal{O}_{2, \ldots, (n+2)}$ again, we obtain the state in
\cref{fig:groverd} (this state is captured by the loop invariant).
Here, we can already measure the qubit~$x_1$ and if the result is~1, this
collapses the probability of the left sub-tree of~$x_1$ in \cref{fig:groverd}
to~0, so the only non-zero probability basis is the solution
$\ket{10s}$.

Otherwise (the result of measuring~$x_1$ was 0), we enter the loop, which contains the \emph{Grover iteration}
circuit, denoted as $\mathcal{G}_{2, \ldots, (n+2)}$, which also uses
$\mathcal{O}_{2, \ldots, (n+2)}$ as a~component.
The effect of $\mathcal{G}_{2, \ldots, (n+2)}$
is to increase the probabilities of basis states for the solutions
and decrease others, as shown in~\cref{fig:grovere,fig:groverf}.
After $\mathcal{G}_{2, \ldots, (n+2)}$, we execute the same sequence
\emph{oracle-rotation-oracle} as above to obtain a~state resembling
\cref{fig:groverd}.
We keep repeating until we measure~$x_1 = 0$, in which case we terminate
with a~solution.

\begin{table}[t] \centering
    \caption{Results of verifying some real-world examples with \autoqq}
\vspace{0.0cm}
\label{tab:exp}
\scalebox{1}{}
\vspace*{-7mm}
\end{table}

The results of the verification of weakly measured Grover's search are in the
left-hand side of \cref{tab:exp}: \autoqq was able to verify the program
w.r.t.\ the specification even for larger numbers of qubits in reasonable time.

\hide{
 \begin{quantikz}[font=\large]
 \lstick{$\ket{0}$} & \gate{H} & \qw & \gate[5]{\mathcal{O}_s}\gategroup[6,steps=3]{$E_\kappa$} & \qw & \gate[5]{\mathcal{O}_s^\dag} & \qw & \gate[5]{G}\gategroup[6,steps=3]{while($M[a_0]]=0$)\ \{$...$\}} & \gate[6]{E_\kappa} & \qw & \qw \\
  \lstick{$\ket{0}$} & \gate{H} & \qw & \qw & \qw & \qw & \qw & \qw & \qw & \qw & \qw \\
 \vdots \\
 \lstick{$\ket{0}$} & \gate{H} & \qw & \qw & \qw & \qw & \qw & \qw & \qw & \qw & \qw \\
 \lstick{$a_1 = \ket{0}$} & \qw & \qw & \qw & \ctrl{1} & \qw & \qw & \qw & \qw & \qw & \qw \\
 \lstick{$a_0 = \ket{0}$} & \qw\slice[style=solid]{} & \qw & \qw & \gate{R_\kappa} & \qw & \meter{M}\gategroup[steps=1,style={draw opacity=0},background,label style={label position=below,yshift=-0.75cm}]{$t_1$\hspace{1.35cm}$t_2$\hspace{0.85cm}$t_3$\hspace{0.85cm}$t_4$\hspace{0.85cm}$t_5$\hspace{0.85cm}$t_6$\hspace{0.85cm}$t_7$\hspace{0.85cm}$t_8$\hphantom{----}} & \qw & \qw & \meter{M} & \qw
 \end{quantikz}}
\vspace{-3.0mm}
\subsection{Unitaries as Repeat-Until-Success Circuits}
\vspace{-2.0mm}

\enlargethispage{3mm}

\emph{Repeat-until-success} programs are a~general framework that was introduced
to simplify quantum circuit decomposition (we introduced an example of generating the ``$-X$'' gate via
the RUS framework in~\cref{sec:overview}).
RUS programs have been shown to be more efficient (in terms of circuit depth)
than ancilla-free techniques when it comes to synthesizing single-qubit gates
(cf.~\cite{RUS,RUS2}).
We present the results of verification of RUS programs for generating various
non-standard gates in the right-hand side of \cref{tab:exp}.
Note that \autoqq can verify these programs instantaneously.

\vspace{-3.0mm}
\section{Conclusion and Future Work}\label{sec:label}
\vspace{-2.0mm}

We presented a~major extension of~\autoq~\cite{chen2023autoq} with an added support for control flow
constructs and evaluated its feasibility on a~family of programs for
the weak-measurement-based version of Grover's algorithm and on implementations
of a~number of non-standard quantum gates using repeat-until-success circuits.
In the future, we wish to extend the framework with automating invariant
generation (e.g., using a~modification of the symbolic-execution-based technique
from~\cite{ChenCJJL24}) and add support for dealing with more complex loops
that give rise to mixed states.

\hide{
The general layout of 

can be seen in figure \ref{fig:RUScircuit}.

Where, $U$ is the $n$-qubit unitary one wants to implement; 
$C$ is an associated quantum circuit on $n+m$ qubits based on the RUS design (c.f. \cite{RUS2}); 
the measurement is performed on the $m$ ancillary qubits with one outcome labelled `success' and all the others labelled `failure'. 
As its name stands, the RUS circuit in the box is repeated until the `success' measurement is observed and $W$ is performed on the target qubits upon a `failure' measurement in order to revert the state to their original input state $\ket{\varphi}$. 
The RUS framework can be interpreted as a quantum programming grammar~\cref{fig:RUSprogram}.

We conduct the verification of several existing RUS programs in the literatures \cite{RUS} as follows: we use the specification 
\begin{align*}
    \autp : \exists i\in \{0,1\}: \{(\specpair{0}{1},\specpair{1}{0}) \otimes (\specpair{i}{1}, \specpair{*}{0} )\} 
\end{align*}
to encode the all basis states as precondition and $\autq$ representing the set of states $\{ U \ket{s} \mid s \in \{0,1\} \}$, up to some scalar multiplication, as the postcondition. In the case of RUS program, the invariant $\mathcal{I}$ can be easily derived as $\mathcal{I} := C(\autp)$, where $C$ is the corresponding circuit component of the RUS program. The result can be found in the right column of \cref{tab:exp}. }

\pagebreak 

\bibliographystyle{splncs04}
\bibliography{literature}

\appendix
\clearpage

\end{document}